\newtheorem{theorem}{Theorem}[section]
\newtheorem{lemma}[theorem]{Lemma}
\newcommand{\argmin}{\operatornamewithlimits{argmin}}
\begin{document}

 \IEEEoverridecommandlockouts
\title{Spectrum Optimization in Multi-User Multi-Carrier Systems 
with Iterative Convex and Nonconvex Approximation Methods}

\author{\IEEEauthorblockN{Paschalis~Tsiaflakis$^*$\thanks{Paschalis
Tsiaflakis is a postdoctoral fellow funded by the Research
Foundation - Flanders (FWO). This text presents
results of the Belgian Program on Interuniversity Attraction
Poles initiated by the Belgian Federal Science Policy Office: IUAP P7/
‘Dynamical systems, control and optimization’ (DYSCO) 2012-2017, and IUAP
BESTCOM 2012-2017. Part of this research was conducted while the 
first author was supported by the Francqui Foundation as a postdoctoral intercommunity collaborator. The scientific responsibility rests with
its authors. A part of this paper has been presented at the IEEE International
Conference on Communications (ICC) \cite{confTsiaflakis2012}.}\thanks{This work has been submitted for possible publication. Copyright may be transferred without notice, after which this version may no longer be accessible.} and
Fran\c{c}ois~Glineur$^\dag$\\}
\IEEEauthorblockA{$^*$Department of Electrical Engineering (ESAT-SCD), KU Leuven, B-3000 Leuven,
Belgium\\$^\dag$Center for Operations Research and Econometrics,
Universit\'e catholique de Louvain, B-1348 Louvain-la-Neuve, Belgium\\
$^\dag$Information and Communication Technologies, Electronics and
Applied Mathematics Institute, Universit\'e catholique de Louvain, B-1348
Louvain-la-Neuve, Belgium\\ e-mail:
paschalis.tsiaflakis@esat.kuleuven.be,
francois.glineur@uclouvain.be}
}
%

%


\maketitle

\vspace{-1.7cm}
\begin{abstract}
Several practical multi-user multi-carrier communication systems are 
characterized by a multi-carrier interference channel system model where 
the interference is treated as noise. For these systems, spectrum 
optimization is a promising means to mitigate interference. This however 
corresponds to a challenging nonconvex optimization problem. Existing 
iterative convex approximation (ICA) methods consist in solving a series 
of improving convex approximations and are typically implemented in a 
per-user iterative approach. However they do not take this typical 
iterative implementation into account in their design. This paper 
proposes a novel class of iterative approximation methods that focuses 
explicitly on the per-user iterative implementation, which allows to 
relax the problem significantly, dropping joint convexity and even 
convexity requirements for the approximations. A systematic design 
framework is proposed to construct instances of this novel class, where 
several new iterative approximation methods are developed with improved 
per-user convex and nonconvex approximations that are both tighter and simpler to solve (in 
closed-form). As a result, these novel methods display a much faster 
convergence speed and require a significantly lower computational cost. 
Furthermore, a majority of the proposed methods can tackle the issue of 
getting stuck in bad locally optimal solutions, and hence improve solution 
quality compared to existing ICA methods. 

\end{abstract}



\begin{center} \textbf{ Index terms} -- Interference channel, spectrum optimization, interference mitigation, iterative approximation, multi-user multi-carrier \end{center}
%
\IEEEpeerreviewmaketitle

\section{Introduction}

The delivery of reliable and high-speed communication services to
a large set of wireless or wireline end-users has become a true challenge. 
This is due to the scarcity of the available frequency
bandwidth resources, the enormous growth of data traffic with stringent
quality-of-service (QoS) requirements, and the increasing number of
users. As a result many modern communication systems are forced to follow
a setup in which multiple users share a common frequency bandwidth that
is reused in time and frequency so as to improve the spectral efficiency.
However, in these systems \emph{interference} among the users impacts the
transmission of each user and can result in a significant performance
degradation such as decreased data rates and reduced reliability. In
fact, interference is identified as one of the major bottlenecks in
multi-user transmission \cite{Gesbert2010}. 

The development and optimization of multi-user interference mitigation techniques is
thus an important consideration. These techniques adopt an approach of
coordinating the transmission of multiple users taking the impact of
interference explicitly into account. In particular, interference
mitigation via \emph{spectrum optimization} is recognized as a very
powerful technique. It consists of the joint coordination of the
transmit spectrum, i.e., transmit powers over all frequency carriers, of
the interfering users so that the spectral efficiency is improved. This
can lead to a maximization of the data rates for a given
transmit power budget \cite{Cendrillon2006}, a minimization of the transmit powers for mininum 
target data rates \cite{energyDSLnordstrom,Monteiro2009}, some fair trade-off between these objectives \cite{Tsiaflakis2011},
or even a maximization of the signal to noise ratio (SNR) margin \cite{Wurtz2012,Monteiro2010}.
We want to highlight here that our target system model considers
spectrum optimization only, i.e., without any joint signal coordination
at the transmitter or the receiver side. From an information-theoretic point of view, our 
considered system corresponds to a \emph{multi-carrier interference channel}.
Here different users try to communicate their separate information over an
interference coupled channel. The capacity region of the interference channel
under all regimes is still an open problem. For specific subclasses of the
interference channels the capacity regions are obtained \cite{Han1981,Sato1981,Sung2008,Liu2008,Chong2008}, or outer
regions are derived \cite{Kramer2004,Etkin2008,Shang2009}. For practical systems where the couplings
are typically weaker than the direct channel, treating interference as additive
white Gaussian noise (AWGN) has been a very important practical communication strategy
in operational networks. This is a model that has been widely considered 
in literature. More specifically, spectrum optimization or power control (for multi-carrier and
single-carrier transmission, respectively) has been considered for
different systems: power control for code division multiple
access (CDMA) in cellular networks \cite{Chiang2007,Chiang2008}, power
control for cognitive wireless networks \cite{Haykin2005}, power control
for femtocell networks \cite{powerFemtocell,Femtocell2012}, general power
control \cite{CheeWeiTan2011, Weeraddana2011, Jing2012, Zhao2009}, spectrum
optimization  for multi-user multi-channel cellular relay networks
\cite{Ren2010}, spectrum optimization for heterogeneous wireless access
networks \cite{Son2011}, and cross-layer power and spectrum optimization
\cite{Weeraddana2011b,Tsiaflakis2008,Tsiaflakis2012}.  

In this paper we will focus on the general case of spectrum optimization
for multi-user multi-carrier systems where the interference is treated as 
AWGN. One important example is the digital
subscriber line (DSL) system where multiple users transmit over
twisted pair copper lines that are bundled in large cable binders. 
Each user adopts discrete multitone (DMT) modulation, a multicarrier
transmission technique where the frequency band is divided into
independent subcarriers, also referred to as tones. At the high
frequencies used by DSL DMT technology, electromagnetic
coupling among the different twisted pair copper lines occurs within the same
cable binder, which is also referred to as crosstalk interference or just
\emph{crosstalk}. Tackling this type of interference using spectrum
optimization is also referred to as dynamic spectrum management (DSM)
level-2, spectrum balancing or spectrum coordination in the DSL literature.
Note that although we will follow this line of spectrum optimization
theory developed for DSL wireline communications, the spectrum
optimization methods developed in this paper are similarly applicable for
general interference-limited multi-user systems that follow
an interference channel system model.

Many spectrum optimization methods have been proposed in the
literature to solve the corresponding \emph{spectrum optimization
problem}. These methods range from fully autonomous\footnote{Fully
autonomous spectrum optimization methods are methods in which each user chooses
its transmit powers autonomously, based on locally available information
only.} \cite{ASB,dsb,Leung10}, and distributed\footnote{Distributed spectrum optimization
methods are methods in which each user chooses its transmit
powers based on locally available information as well as information
obtained from other users through limited message-passing.} \cite{dsb,
MIW,scalejournal,Moraes2010,Wang2012}, to centralized methods\footnote{Centralized 
spectrum optimization methods are methods where the transmit powers of all users are
determined in a centralized location such as the spectrum management
center (SMC), where one has access to full knowledge of the channel
environment.} \cite{PBnB,tsiaflakis_bbosb,ISB_Raphael,dual_journal,Wolkerstorfer2012, Luo2009}. In particular, the approach
of \emph{iterative convex approximation} (ICA) has been recognized to be
very efficient,  as exemplified by the CA-DSB \cite{dsb,Tsiaflakis2007} and SCALE
\cite{scalejournal} algorithms. These algorithms consist in solving a series of improving
convex approximations to the original nonconvex problem, until convergence to a
locally optimal solution or a stationary point. They are furthermore
characterized by a low computational cost, high speed of convergence,
and potential distributed and asynchronous implementations. The
complexity of this ICA approach depends on two factors: (i) the type of
approximation, where a tighter approximation generally results in fewer
iterations until convergence, and (ii) the computational cost of solving
the corresponding approximation. Such ICA algorithms (for distributed or
centralized computation) are typically implemented in a per-user
iterative approach, as this is demonstrated to be highly efficient
\cite{Leung10,dsb,Huberman09,MIW,scalejournal,Moraes2010} for practical and implementational
reasons and it results in good performance. 
For instance, it is shown in \cite{dsb} that these iterative implementations can solve small- to medium-scale DSL scenarios, i.e., up to 6 users and 1000 tones, within a few seconds. However, for large-scale DSL scenarios, i.e. 10 to 50
users and 4000 tones, these iterative methods can take several minutes or even hours. Moreover, it is
shown in \cite{dsb} that these methods can sometimes get stuck in bad
locally optimal solutions.

In this paper we design a novel class of iterative approximation methods
that explicitly focus on the typical per-user iterative implementation.
It will be shown that in this setting, joint convexity and even convexity 
requirements on the approximations can be relaxed, resulting
in a much larger degree of freedom to design convergent iterative
algorithms. A design framework will be proposed that provides a design
strategy for developing improved (convex as well as nonconvex) approximations 
that are much \emph{tighter} and simpler to solve (i.e., \emph{in closed-form}) than those of existing state-of-the-art ICA methods.
With this design framework, \emph{ten novel methods} are developed that are provably
better than existing ICA methods in terms of faster convergence and
lower computational cost. Furthermore, it will be shown how some of the proposed
methods can tackle the issue of getting stuck in bad locally optimal solutions, 
and can thus even \emph{improve the final solution quality} (i.e., improve data rate performance).
Moreover, the flexibility of the proposed novel class allows the combination of
different types of approximations over different tones and users, which even
further improves the trade-off between computational cost and solution quality.
Both a reduction in the number of iterations required for convergence and an improvement in
solution quality will be demonstrated using a realistic DSL simulator. 

This text is organized as follows. Section~\ref{sec:systemmodel} briefly 
describes the multi-user multi-carrier system model and spectrum optimization. 
In Section~\ref{sec:ICA}, existing ICA spectrum optimization approaches are explained and
the typical per-user iterative implementation is highlighted. 
A general outline for our novel class of algorithms is described in Section~\ref{sec:generalalgorithm}, and the corresponding design framework is constructed in Section~\ref{sec:framework}. Using this framework, ten novel methods are constructed
in Section~\ref{sec:novelmethods}, which are analyzed and generalized in Section~\ref{sec:analysismethods}. 
Potential iterative fixed point update implementations
are considered and discussed in Section~\ref{sec:iterativefixedpoint}.
Improved solution quality of the proposed  methods based on nonconvex approximations
is discussed in Section~\ref{sec:nonconvexity}, and frequency and user
selective allocation of approximations is discussed in Section~\ref{sec:allocationapproximations}.
Finally, Section~\ref{sec:simulations} provides extensive simulation results for different realistic ADSL/ADSL2/VDSL scenarios and highlights the improved performance of the proposed novel class of algorithms with respect to that of state-of-the-art ICA methods in terms of convergence speed,  computational cost and solution quality.

\section{System Model and Spectrum Optimization}\label{sec:systemmodel}
We consider a multi-user multi-carrier system with a set
$\mathcal{N}=\{1,\ldots,N\}$ of $N$ users and a set
$\mathcal{K}=\{1,\ldots,K\}$ of $K$ tones (i.e., frequency
carriers). We consider synchronous transmission without intercarrier
interference. As we focus on spectrum optimization, we assume no signal
coordination at the transmitters or at the receivers. Furthermore, we make
the practical assumption that interference is treated as AWGN, which is for
instance a standard valid assumption in DSL communications when the number 
of interferers is large \cite{Kerpez93}.
Under these typical assumptions for spectrum coordination \cite{Cendrillon2006}, 
the \emph{spectrum optimization problem} (SO problem) can be formulated as
follows
\begin{equation}\label{eq:smp}
\begin{array}{lcl}
\mathrm{(SO~problem):}&{\displaystyle \min_{\mathbf{s}_k \in
\mathcal{S}_k, k \in \mathcal{K}}} & {\displaystyle \sum_{k \in
\mathcal{K}} f_k(\mathbf{s}_k)}\\
&\mathrm{subject~to} & {\displaystyle \sum_{k \in \mathcal{K}}
\mathbf{s}_k \leq \mathbf{P}^{\mathrm{tot}}},
\end{array} 
\end{equation}
with $\mathbf{s}_k=[s_k^1, \ldots, s_k^N]^T$ denoting transmit powers of
all users on tone $k$, the vector constant
$\mathbf{P}^{\mathrm{tot}}=[P^{1,\mathrm{tot}}, \ldots,
P^{N,\mathrm{tot}}]^T$ denoting total power budgets for all users,
and the set $\mathcal{S}_k=\{s_k^n, n \in \mathcal{N} \mid 0\leq s_k^n \leq
s_k^{n,\mathrm{mask}}, \forall n \in \mathcal{N}\}$ denoting the feasible
set on each tone $k$ with a maximum spectral transmit constraint
$s_k^{n,\mathrm{mask}}$ for user $n$. We also define the set 
$\mathcal{S}_k^n=\{s_k^n \mid 0\leq s_k^n \leq s_k^{n,\mathrm{mask}}\}$ which
will be used later in the text.

Each term $f_k(\mathbf{s}_k)$ in the objective function 
corresponds to the following nonconvex function
\begin{equation}\label{eq:smpobj}
f_k(\mathbf{s}_k)=-\sum_{n \in \mathcal{N}} w_n \log\left(
1+\frac{s_k^n}{{\displaystyle \sum_{m \neq n}} a_k^{n,m} s_k^m +
z_k^n}\right),
\end{equation}
with $w_n$ denoting the weight given to the achievable data rate of user $n$,
$a_k^{n,m}$ denoting the normalized channel gains from transmitter $m$ to
receiver $n$ on tone $k$, and $z_k^n$ denoting the normalized AWGN power
for line $n$ on tone $k$. The SNR gap \cite{UnderstandingDSL} is also included in the
normalized channel gains and noise power. The optimization problem (\ref{eq:smp}) with objective function (\ref{eq:smpobj}) corresponds to a standard weighted sum of 
achievable data rates maximization over multiple carriers and under a separate power constraint for each user.

We also define symbols $\mathrm{int}_k^n$ for the interference (plus noise) of user $n$
on tone $k$, and $\mathrm{rec}_k^n$ for the received signal of user $n$
on tone $k$, so as to facilitate notation in the remainder of the text:
\begin{eqnarray}
 & & \mathrm{int}_k^n = {\displaystyle \sum_{m \neq n} a_k^{n,m} s_k^m + z_k^n},\\
 & & \mathrm{rec}_k^n = {\displaystyle s_k^n + \sum_{m \neq n} a_k^{n,m} s_k^m + z_k^n = s_k^n + \mathrm{int}_k^n },
\end{eqnarray}
so that $f_k(\mathbf{s}_k)$ can be written as 
\begin{displaymath}
 f_k(\mathbf{s}_k)=-\sum_{n \in \mathcal{N}} w_n \log\left(
1+\frac{s_k^n}{\mathrm{int}_k^n}\right) = -\sum_{n \in \mathcal{N}} w_n \log\left(
\frac{\mathrm{rec}_k^n}{\mathrm{int}_k^n}\right).
\end{displaymath}
With the above definitions, we can also define the achievable data rates and bit loadings as
\begin{equation}
 R^n=\sum_{k \in \mathcal{K}} b_k^n=\sum_{k \in \mathcal{K}} \log \left( 1+\frac{s_k^n}{\mathrm{int}_k^n}\right),
\end{equation}
where $R^n$ and $b_k^n$ are the achievable data rate of user $n$ and the achievable
bit loading of user $n$ on tone $k$, respectively.

\section{Spectrum Optimization Through Iterative Convex
Approximation}\label{sec:ICA}
The ICA approach to tackle the SO problem consists in solving a series of
improving convex approximations. This procedure is given in
Algorithm~\ref{algo:ica}. In line 1 an initial approximation point $\mathbf{\tilde{s}}_k, k \in \mathcal{K},$
is chosen from the set $\{\mathbf{s}_k \in \mathcal{S}_k, k \in \mathcal{K}\}$. This 
initial point can be chosen based on a high
SNR heuristic \cite{scalejournal}, a zero transmit
power value \cite{dsb}, or even randomly \cite{Tsiaflakis2008,Tsiaflakis2012}. In line 3 the nonconvex
function $f_k(\mathbf{s}_k)$ is approximated by a convex function
$f_k^{\mathrm{cvx}}(\mathbf{s}_k; \mathbf{\tilde{s}}_k)$ 
built around the chosen approximation point $\mathbf{\tilde{s}}_k$. 
We want to highlight that the  convex approximation depends on the 
chosen approximation point $\mathbf{\tilde{s}}_k$. The resulting convex problem
is then solved in line~4 to obtain an improved solution, which is used
as a new approximation point in line 5. This is iterated in a loop (lines
2-6) until convergence to a stationary point. 
\begin{algorithm}
\caption{Iterative convex approximation (ICA) for SO problem
(\ref{eq:smp})}\label{algo:ica}
\begin{algorithmic}[1]
\STATE Choose an initial approximation point $\mathbf{\tilde{s}}_k \in
\mathcal{S}_k, k \in \mathcal{K}$
\WHILE{not converged to stationary point}
\STATE Approximate $f_k(\mathbf{s}_k)$ in problem (1) by convex function
$f_k^{\mathrm{cvx}}(\mathbf{s}_k; \mathbf{\tilde{s}}_k),  k \in
\mathcal{K}$
\STATE Solve corresponding convex problem to obtain
$\mathbf{s}_k^{\mathrm{cvx},*}, k \in \mathcal{K}$
\STATE Set $\mathbf{\tilde{s}}_k=\mathbf{s}_k^{\mathrm{cvx},*}, \forall k
\in \mathcal{K}$
\ENDWHILE
\end{algorithmic}
\end{algorithm}

The ICA procedure is guaranteed to converge to a locally optimal solution
or a stationary point of the SO problem under the following conditions
\cite{Marks78}:
\begin{eqnarray}
 &
f_k^{\mathrm{cvx}}(\mathbf{\tilde{s}}_k;\mathbf{\tilde{s}}_k)=f_k(\mathbf
{\tilde{s}}_k), \forall k \in \mathcal{K},& \label{cond:1}\\
  & \nabla f_k^{\mathrm{cvx}}(\mathbf{\tilde{s}}_k;\mathbf{\tilde{s}}_k)
= \nabla f_k(\mathbf{\tilde{s}}_k), \forall k \in \mathcal{K},&
\label{cond:2}\\
   & f_k^{\mathrm{cvx}}(\mathbf{s}_k;\mathbf{\tilde{s}}_k) \geq
f_k(\mathbf{{s}}_k), \forall \mathbf{s}_k \in \mathcal{S}_k, k \in
\mathcal{K},& \label{cond:3}
\end{eqnarray}
which impose that \begin{itemize}\item the approximation $f_k^{\mathrm{cvx}}(\mathbf{s}_k;
\mathbf{\tilde{s}}_k)$ coincides at first-order with the true objective
function $f_k(\mathbf{{s}}_k)$ around the approximation point
$\mathbf{\tilde{s}}_k, k \in \mathcal{K}$, with equations~\eqref{cond:1}--\eqref{cond:2}, 
\item  the approximation is an upper bound on the true objective function on the whole feasible set $\mathcal{S}_k$, via inequality~\eqref{cond:3}. \end{itemize} The CA-DSB and SCALE algorithms are two effective ICA procedures 
that use the following convex approximations:
\begin{eqnarray}
 & & f_k^{\mathrm{CADSB}}(\mathbf{s}_k,\mathbf{\tilde{s}}_k) = -\sum_{n \in \mathcal{N}} \left[ 
 w_n \log \left( \mathrm{rec}_k^n \right) \right] + \sum_{n \in \mathcal{N}}
 b_k^n(\mathbf{\tilde{s}}_k) s_k^n + c_k^n(\mathbf{\tilde{s}}_k)\\
 & & f_k^{\mathrm{SCALE}}(\mathbf{s}_k,\mathbf{\tilde{s}}_k) = - \sum_{n \in \mathcal{N}} 
 \left[ w_n \alpha_k^n(\mathbf{\tilde{s}}_k^n) \log \left( \frac{s_k^n}{\mathrm{int}_k^n}\right)\right] + c_k^n(\mathbf{\tilde{s}}_k)
\end{eqnarray}
in which parameters $b_k^n(\mathbf{\tilde{s}}_k)$, $\alpha_k^n(\mathbf{\tilde{s}_k})$ and
$c_k^n(\mathbf{\tilde{s}}_k)$ are constants that depend on the
approximation point $\mathbf{\tilde{s}}_k$,  and which are computed in closed-form
such that conditions (\ref{cond:1})-(\ref{cond:3}) are satisfied, see
\cite{dsb, scalejournal}. For instance, the following closed-form expressions
hold for $b_k^n(\mathbf{\tilde{s}}_k)$ and $\alpha_k^n(\mathbf{\tilde{s}}_k)$:
\begin{eqnarray}
& & b_k^n(\tilde{\mathbf{s}}_k)=\sum_{m \neq n} \frac{w_m
a_k^{m,n}}{{\displaystyle \mathrm{int}_k^m|_{\mathbf{s}_k^{-m}=\mathbf{\tilde{s}}_k^{-m}}}},\\
& & \alpha_k^n(\mathbf{\tilde{s}}_k) = \frac{\tilde{s}_k^n}{\mathrm{rec}_k^n|_{\mathbf{s}_k=\mathbf{\tilde{s}}_k}}.\label{eq:scaleconstant}
\end{eqnarray}
where $\mathrm{expression(\mathbf{x})}|_{\mathbf{x}=\mathbf{x}_0}$ means 
that the value of the variable $\mathbf{x}$ in the expression is fixed at $\mathbf{x}=\mathbf{x}_0$,
and where $\mathbf{s}_k^{-n}$ refers to the vector containing transmit powers on
tone $k$ of all users except for user $n$. As such, it can be seen that constructing the convex approximations,
i.e., line 3 of Algorithm~\ref{algo:ica}, is very simple, whereas line 4
corresponds to solving the convex problem and constitutes the main part
of the computational cost.

For the concrete implementation of ICA, one typically follows a \emph{per-user
iterative approach}, where each user iteratively computes its own transmit powers taking the fixed interference of the other users into account. This is done for several reasons: (i) it allows for distributed implementations, (ii) it allows the use of simple bisection searches to 
solve the per-user dual problem (as will be discussed later in this 
section), (iii) it allows to take the bound constraints into account with 
a simple projection operation, (iv) it doesn't require any stepsize tuning, 
and (v) most importantly, it has been demonstrated in the literature \cite{Huberman09,dsb} to have fast convergence and 
good performance for small- to medium-scale practical DSL scenarios. 
Because of these practically interesting and implementationally efficient
reasons, the per-user iterative implementation has been the preferred scheme for spectrum
optimization (in centralized \cite{dsb,Huberman09} as well as distributed 
implementations \cite{ASB,dsb,MIW,scalejournal,Moraes2010}).
The typical per-user iterative implementation is shown in Algorithm~{\ref{algo:userica}},
\algsetup{indent=0.4em}
\begin{algorithm}
\caption{Per-user iterative ICA approach}\label{algo:userica}
\begin{algorithmic}[1]
\STATE Choose an initial approximation point $\mathbf{\tilde{s}}_k \in
\mathcal{S}_k, k \in \mathcal{K}$
\FOR{outer iterations}
 \FOR{ user $n=1$ to $N$}
  \FOR{inner iterations}
   \STATE Update per-user convex approximation $f_k^{\mathrm{cvx}}(s_k^n;
\mathbf{\tilde{s}}_k^{-n}, \mathbf{\tilde{s}}_k)$ in $\tilde{s}_k^n, k \in \mathcal{K}$ to obtain (\ref{eq:usersmpcvx}) 
   \STATE Solve per-user convex problem (\ref{eq:usersmpcvx}) to obtain
$s_k^{\mathrm{cvx},n,*}, k \in \mathcal{K}$
   \STATE Set $\tilde{s}_k^n=s_k^{\mathrm{cvx},n,*}, k \in \mathcal{K}$
   \ENDFOR 
 \ENDFOR
\ENDFOR
\end{algorithmic}
\end{algorithm}
where line 2 corresponds to a fixed number of \emph{outer} iterations over all users (one can also use instead the stopping criterion used in line 2 of Algorithm~\ref{algo:ica}), line 3 is the
iteration over the users, and lines 4 to 8 focus on solving the following
per-user version of the SO problem for a given user $n$:
\begin{equation}\label{eq:usersmp}
\begin{array}{cl}
{\displaystyle \min_{s^n_k \in \mathcal{S}_k^n, k \in \mathcal{K}}} &
{\displaystyle \sum_{k \in \mathcal{K}} f_k(s^n_k; \mathbf{\tilde{s}}_k^{-n})}\\
\mathrm{subject~to} & {\displaystyle \sum_{k \in \mathcal{K}} s^n_k \leq
\mathbf{P}^{n,\mathrm{tot}}}.
\end{array} 
\end{equation}
More specifically, lines 4 to 8 come down to solving (\ref{eq:usersmp})
with an iterative approximation approach, performing a series of \emph{inner} iterations where each approximation (line 6 of Algorithm \ref{algo:userica}) corresponds to the following per-user 
problem 
\begin{equation}\label{eq:usersmpcvx}
\begin{array}{cl}
{\displaystyle \min_{s^n_k \in \mathcal{S}_k^n, k \in \mathcal{K}}} &
{\displaystyle \sum_{k \in \mathcal{K}} f_k^{\mathrm{cvx}}(s^n_k;
\mathbf{\tilde{s}}_k^{-n}, \mathbf{\tilde{s}}_k)}\\
\mathrm{subject~to} & {\displaystyle \sum_{k \in \mathcal{K}} s^n_k \leq
\mathbf{P}^{n,\mathrm{tot}}}.
\end{array} 
\end{equation}
We want to highlight here that the functions $f_k(s_k^n; \mathbf{s}_k^{-n})$ and $f_k^{\mathrm{cvx}}(s_k^n;\mathbf{s}_k^{-n},\mathbf{\tilde{s}}_k)$
are univariate (or one-dimensional) functions in $s_k^n$, because $\mathbf{s}_k^{-n}$
is fixed during the inner iterations of user $n$. The above convex approximated 
per-user problem (\ref{eq:usersmpcvx}) is then typically solved by focusing on the following dual
problem formulation \cite{dsb,scalejournal}, as this allows easier handling of the constraints:
\begin{equation}
{\displaystyle \max_{\lambda_n \geq 0}} ~~g( \lambda_n)
\label{eq:dualcvx} 
\end{equation}
\begin{equation}\label{eq:dualsub}
\mathrm{~~~~with~~} {\displaystyle g( \lambda_n) := -\lambda_n P^{n,\mathrm{tot}}+\sum_{k \in
\mathcal{K}}  \underbrace{\left[ \min_{s_k^n \in \mathcal{S}_k^n}
f_k^{\mathrm{cvx}}(s_k^n; \mathbf{\tilde{s}}_k^{-n},\mathbf{\tilde{s}}_k)+\lambda_n
s_k^n \right]}_{(\mathrm{SUB})}}.
\end{equation}
The dual problem (\ref{eq:dualcvx}) is a univariate problem in the
dual variable $\lambda_n$, which can be solved using a simple bisection
search, or (sub-)gradient update approaches \cite{tsiaflakis_bbosb}.
However, the evaluation of the objective function $g(\lambda_n)$
corresponds to an optimization problem on its own, i.e.,
(\ref{eq:dualsub}). This problem can be decomposed over tones for a given
$\lambda_n$, resulting in $K$ independent univariate subproblems, i.e., SUB in (\ref{eq:dualsub}),
which are convex subproblems as $f_k^{\mathrm{cvx}}(s_k^n;
\mathbf{\tilde{s}}_k^{-n},\mathbf{\tilde{s}}_k)$ is convex in $s_k^n$. This is also
referred to as \emph{dual decomposition}. For both
CA-DSB, i.e., $f_k^{\mathrm{cvx}}(s_k^n;
\mathbf{\tilde{s}}_k^{-n},\mathbf{\tilde{s}}_k)=f_k^{\mathrm{CADSB}}(s_k^n;
\mathbf{\tilde{s}}_k^{-n},\mathbf{\tilde{s}}_k)$, and SCALE, i.e.,
$f_k^{\mathrm{cvx}}(s_k^n;
\mathbf{\tilde{s}}_k^{-n},\mathbf{\tilde{s}}_k)=f_k^{\mathrm{SCALE}}(s_k^n;
\mathbf{\tilde{s}}_k^{-n},\mathbf{\tilde{s}}_k)$, it can be shown that each independent decomposed subproblem
in (\ref{eq:dualsub}) can be solved by computing the roots
of a polynomial of degree $N$, which can only be done in closed-form when $N\le4$,
i.e., with 4 users or less. Therefore iterative fixed
point updates were proposed in \cite{dsb,scalejournal} to solve the subproblems
in (\ref{eq:dualsub}), instead of computing closed-form solutions.


\begin{table*} 
\caption{Decomposition of 1D function for different methods, and max degree of
corresponding polynomial to solve subproblems in (\ref{eq:dualappsub}). Second naming scheme is described in Section~\ref{sec:analysismethods}. C and NC stand for convex and nonconvex.}
 \label{tab:approximations}
\begin{tabular}{ | c | l | l | c  | }
  \hline
 Methods & $f_k^1(s_k^n;\mathbf{s}_k^{-n},\mathbf{\tilde{s}}_k,\theta)$ & $f_k^2(s_k^n;\mathbf{s}_k^{-n},\mathbf{\tilde{s}}_k,\theta)$ & Degree\\
  \hline
  ORIG & $-w_n \log \left( 1+\frac{s_k^n}{{\displaystyle \mathrm{int}_k^n }}\right)$ & \slash & $2N-1$\\
  & $\qquad \displaystyle {-\sum_{m \in \mathcal{N}\backslash n}} \left[ w_m \log \left(1+\frac{s_k^m}{\mathrm{int}_k^m} \right)\right]$ & & \\
  \hline
  CADSB & $\displaystyle{-w_n \log \left(s_k^n + \mathrm{int}_k^n \right)}$ & $\displaystyle{+\sum_{n \in \mathcal{N}}} \left[ w_n \log \left(
\mathrm{int}_k^n \right) \right]$ &  $N$\\
  & $\qquad \displaystyle{-\sum_{m \in \mathcal{N}\backslash n}} \left[ w_m \log \left(s_k^m + \mathrm{int}_k^m \right) \right]$ & & \\
  \hline
  SCALE & $- w_n \alpha_k^n(\mathbf{\tilde{s}}_k)\log \left( \frac{s_k^n}{\mathrm{int}_k^n} \right) - c_k^n(\mathbf{\tilde{s}}_k)$ & 
$\displaystyle{\sum_{n \in \mathcal{N}}} \left[ w_n
\alpha_k^n(\mathbf{\tilde{s}}_k)\log \left( \frac{s_k^n}{\mathrm{int}_k^n} \right)\right]+ c_k^n(\mathbf{\tilde{s}}_k)$ & $N$ \\
& $\displaystyle{-\sum_{m \in \mathcal{N}\backslash n}} \left[ w_m \alpha_k^m(\mathbf{\tilde{s}}_k)\log \left( \frac{s_k^m}{\mathrm{int}_k^m} \right)\right]$ & \qquad \qquad \qquad \qquad \qquad $\displaystyle {-\sum_{n \in \mathcal{N}}} \left[ w_n \log \left(1+\frac{s_k^n}{\mathrm{int}_k^n} \right)\right]$ & \\
\hline
$\begin{matrix} \text{IASB1} \\ \text{(IA1, C)} \end{matrix}$  & $-w_n \log
\left( 1+\frac{s_k^n}{{\displaystyle \mathrm{int}_k^n }}\right)$ & 
$\displaystyle {-\sum_{m \in \mathcal{N}\backslash n}} \left[ w_m \log \left(1+\frac{s_k^m}{\mathrm{int}_k^m} \right)\right]$ & $1$\\
\hline
$\begin{matrix} \text{IASB2} \\ \text{(IA2-$L$, C/NC)} \end{matrix}$ & $-w_n \log
\left( 1+\frac{s_k^n}{{\displaystyle \mathrm{int}_k^n }}\right)-L_k^n(s_k^n-\tilde{s}_k^n)^2$ & 
$\displaystyle {-\sum_{m \in \mathcal{N}\backslash n}} \left[ w_m \log \left(1+\frac{s_k^m}{\mathrm{int}_k^m} \right)\right]+L_k^n(s_k^n-\tilde{s}_k^n)^2$ & $2$\\
\hline
$\begin{matrix} \text{IASB3} \\ \text{(IA3-$r$, NC)} \end{matrix}$  & $ \displaystyle{ - w_n \log \left( 1+\frac{s_k^n}{ \mathrm{int}_k^n}\right)
- w_q \log \left( 1+\frac{s_k^q}{ \mathrm{int}_k^q}\right)}$ & $\displaystyle{- \sum_{p \in \mathcal{N}\backslash \{n,q\}}} \left[ w_p
\log \left( 1+\frac{s_k^p}{ \mathrm{int}_k^p}\right)\right]$ & $3$\\
\hline
$\begin{matrix} \text{IASB4} \\ \text{(IA2-$\alpha$, NC)} \end{matrix}$ & $-w_n \log \left( 1+\frac{s_k^n}{\mathrm{int}_k^n}\right)$ & 
$+w_q \alpha_k^q(\mathbf{\tilde{s}}_k) \log
\left( \frac{s_k^q}{\mathrm{int}_k^q}\right) + c_k^n(\mathbf{\tilde{s}}_k) $ & $2$\\
& \qquad \qquad \qquad $-w_q \alpha_k^q(\mathbf{\tilde{s}}_k) \log \left( \frac{s_k^q}{\mathrm{int}_k^q}\right) - c_k^n(\mathbf{\tilde{s}}_k)$ & 
\qquad \qquad \qquad \qquad $\displaystyle {-\sum_{m \in \mathcal{N}\backslash n}} \left[ w_m \log \left(1+\frac{s_k^m}{\mathrm{int}_k^m} \right)\right]$ &\\
\hline
$\begin{matrix} \text{IASB5} \\ \text{(IA3-$\alpha^2$, NC)} \end{matrix}$ & $-w_n \log
\left( 1+\frac{s_k^n}{\mathrm{int}_k^n}\right)$ & ${\displaystyle \sum_{p \in \{q,t\}}} \left[ w_p
\alpha_k^p(\mathbf{\tilde{s}}_k) \log \left(\frac{s_k^p}{\mathrm{int}_k^p}\right)\right] + c_k^n(\mathbf{\tilde{s}}_k)$ & $3$\\
& $- {\displaystyle \sum_{p \in \{q,t\}}} \left[ w_p
\alpha_k^p(\mathbf{\tilde{s}}_k) \log \left(\frac{s_k^p}{\mathrm{int}_k^p}\right)\right] - c_k^n(\mathbf{\tilde{s}}_k)$& $\displaystyle {-\sum_{m \in \mathcal{N}\backslash n}} \left[ w_m \log \left(1+\frac{s_k^m}{\mathrm{int}_k^m} \right)\right]$ &\\
\hline
$\begin{matrix} \text{IASB6} \\ \text{(IA1-$\beta$, C)} \end{matrix}$ & $-w_n (1-\beta_k^n) \log
\left( 1+\frac{s_k^n}{{\displaystyle \mathrm{int}_k^n }}\right)$ & 
$-\beta_k^n w_n \log \left( 1+\frac{s_k^n}{{\displaystyle \mathrm{int}_k^n }}\right)
\displaystyle {-\sum_{m \in \mathcal{N}\backslash n}} \left[ w_m \log \left(1+\frac{s_k^m}{\mathrm{int}_k^m} \right)\right]$ & $1$\\
\hline
$\begin{matrix} \text{IASB7} \\ \text{(IA3-$\beta$r, NC)} \end{matrix}$ & $- w_n (1-\beta_k^n) \log \left( 1+\frac{s_k^n}{ \mathrm{int}_k^n}\right)- w_q
\log \left( 1+\frac{s_k^q}{ \mathrm{int}_k^q}\right)$ & $
-\beta_k^n w_n \log \left( 1+\frac{s_k^n}{{\displaystyle \mathrm{int}_k^n }}\right)\displaystyle{- \sum_{p \in \mathcal{N}\backslash \{n,q\}}} \left[ w_p
\log \left( 1+\frac{s_k^p}{ \mathrm{int}_k^p}\right)\right]$ & $3$\\
\hline
$\begin{matrix} \text{IASB8} \\ \text{(IA3-$\beta\alpha^2$, NC)} \end{matrix}$ & $-w_n (1-\beta_k^n) \log
\left( 1+\frac{s_k^n}{\mathrm{int}_k^n}\right)$ & $
-\beta_k^n w_n \log \left( 1+\frac{s_k^n}{{\displaystyle \mathrm{int}_k^n }}\right) \displaystyle{-\sum_{m \in \mathcal{N}\backslash n}} \left[ w_m \log \left(1+\frac{s_k^m}{\mathrm{int}_k^m} \right)\right]$ & $3$\\
& \qquad $- {\displaystyle \sum_{p \in \{q,t\}}} \left[ w_p
\alpha_k^p(\mathbf{\tilde{s}}_k) \log \left(\frac{s_k^p}{\mathrm{int}_k^p}\right)\right]- c_k^n(\mathbf{\tilde{s}}_k)$ & \qquad \qquad \qquad $+ {\displaystyle \sum_{p \in \{q,t\}}} \left[ w_p
\alpha_k^p(\mathbf{\tilde{s}}_k) \log \left(\frac{s_k^p}{\mathrm{int}_k^p}\right)\right] + c_k^n(\mathbf{\tilde{s}}_k)$ & \\
\hline
$\begin{matrix} \text{IASB9} \\ \text{(IA3-$\alpha$L, NC)} \end{matrix}$ & $-w_n \log
\left( 1+\frac{s_k^n}{\mathrm{int}_k^n}\right) -L_k^n(s_k^n-\tilde{s}_k^n)^2 $ & 
$+w_q \alpha_k^q(\mathbf{\tilde{s}}_k) \log
\left( \frac{s_k^q}{\mathrm{int}_k^q}\right) + c_k^n(\mathbf{\tilde{s}}_k) +L_k^n(s_k^n-\tilde{s}_k^n)^2$ & $3$\\
& \qquad \qquad \qquad $-w_q \alpha_k^q(\mathbf{\tilde{s}}_k) \log
\left( \frac{s_k^q}{\mathrm{int}_k^q}\right) - c_k^n(\mathbf{\tilde{s}}_k)$ & \qquad \qquad \qquad \qquad \qquad \qquad $\displaystyle {-\sum_{m \in \mathcal{N}\backslash n}} \left[ w_m \log \left(1+\frac{s_k^m}{\mathrm{int}_k^m} \right)\right]$ & \\
\hline
$\begin{matrix} \text{IASB10} \\ \text{(IA$N$-$\alpha^{N-1}$, NC)} \end{matrix}$ & $-w_n \log(1+\frac{s_k^n}{\mathrm{int}_k^n})$ & 
$\displaystyle{\sum_{m \in \mathcal{N}\backslash n}} \left[ w_m
\alpha_k^m(\mathbf{\tilde{s}}_k)\log \left( \frac{s_k^m}{\mathrm{int}_k^m} \right)\right]+ c_k^m(\mathbf{\tilde{s}}_k)$ & $N$\\
& $\displaystyle{-\sum_{m \in \mathcal{N}\backslash n}} \left[ w_m
\alpha_k^m(\mathbf{\tilde{s}}_k)\log \left( \frac{s_k^m}{\mathrm{int}_k^m} \right)\right] - c_k^m(\mathbf{\tilde{s}}_k)$ & 
\qquad \qquad \qquad \qquad \qquad $\displaystyle {-\sum_{m \in \mathcal{N}\backslash n}} \left[ w_m \log \left(1+\frac{s_k^m}{\mathrm{int}_k^m} \right)\right]$ & \\
\hline
\end{tabular}
\end{table*}

The above per-user iterative approach using dual decomposition is recognized as
being very effective in tackling the SO problem in the sense that it can
find locally optimal solutions to the SO problem with only small
computational cost. More specifically, it is shown in \cite{dsb} that the
SO problem up to dimension $KN=2000$ can be solved within a few seconds.

However, there are also some drawbacks when using existing ICA methods 
(CA-DSB and SCALE). First, the SO problem can have many locally
optimal solutions, depending on the considered scenario, where many
of those locally optimal solutions can correspond to a quite suboptimal
data rate performance, as demonstrated in \cite{dsb}. Existing ICA
methods feature no mechanism to tackle this issue and, depending on the
chosen initial point and the considered scenario, may converge to a
locally optimal solution with very seriously deteriorated data rate
performance. A second issue is that for large-scale scenarios, e.g., DSL scenarios with
10-100 users and more than 1000 tones, ICA methods may take several
minutes or even hours. Any improvement to reduce the execution time is
desirable for such large-scale scenarios.

\section{A Novel Class of Iterative Approximation Methods}\label{sec:iterativemethods}

Although existing ICA methods are typically implemented in a per-user
iterative fashion as shown in the previous section, their design does not take this iterative
implementation into account. We propose a novel class of algorithms 
whose design is tailored to this per-user iterative implementation.
We will show that this approach allows to relax the per-user 
problem to be solved at each iteration, 
resulting in a much larger design space and can lead to
significantly improved convergent iterative spectrum optimization
algorithms.

\subsection{General Algorithm for Novel Class}\label{sec:generalalgorithm}
When analyzing the typical iterative implementation of
Algorithm~\ref{algo:userica}, it can be observed that lines 5 and 6
approximate and solve the per-user problem (\ref{eq:usersmpcvx}) and thus, following a typical
dual decomposition approach, line~6 resorts to solving (possibly in parallel) $K$ independent univariate
approximations in the variable $s_k^n$, 
one for each tone $k$, as in (\ref{eq:dualsub}). The univariate approximations
can be considered as per-user per-tone approximations.

Existing ICA methods rely on approximations 
$f_k^{\mathrm{cvx}}(\mathbf{s}_k,\mathbf{\tilde{s}}_k)$ that are jointly convex in
$\mathbf{s}_k$. While this joint convexity property guarantees that each per-user problem (13) is convex, it is not a necessary condition (i.e., there are situations where per-user problems are convex while $f_k^{\mathrm{cvx}}(\mathbf{s}_k,\mathbf{\tilde{s}}_k)$ is not jointly convex). Moreover, although convexity of the per-user problems typically ensures that they can be efficiently minimized, it is not a strict requirement as some nonconvex problems still admit fast resolution techniques. Therefore we propose to relax this joint convexity restriction
on $f_k^{\mathrm{cvx}}(\mathbf{s}_k,\mathbf{\tilde{s}}_k)$ and even drop the requirement of being univariately convex in $s_k^n$.
This results in the novel class of per-user iterative methods of Algorithm~\ref{algo:novel}, where
the main difference is that they rely on a univariate approximating function
$f_k^{\mathrm{app}}(s_k^n;\mathbf{\tilde{s}}_k^{-n},\mathbf{\tilde{s}}_k,\theta)$
that is not necessarily jointly convex in $\mathbf{s}_k$ or even convex in
$s_k^n$. Here, $\theta$ stands for a tuning parameter that will be exploited
later in the text to further tighten the approximation. Note that our approach 
starts from the iterative per-user implementation as a hard constraint in which 
the unnecessary constraint of joint convexity
or even univariate convexity for the approximations is relaxed. We will show
that this provides a larger design freedom where approximations can be proposed that are 
tighter (i.e., have a smaller approximation error along the considered dimension of user $n$)
and for which the subproblems in (\ref{eq:dualsub}), can be solved more easily, i.e., even in closed-form.
As a result this novel class of algorithms leads to fewer inner iterations to converge
to the local optimum of that user's iteration, i.e., performs less iterations of lines 4 to 8,
and in addition requires a lower computational cost per iteration. We only impose that the approximating function
$f_k^{\mathrm{app}}(s_k^n;\mathbf{\tilde{s}}_k^{-n},\mathbf{\tilde{s}}_k,\theta)$ univariately 
satisfies the convergence conditions 
(\ref{cond:1}), (\ref{cond:2}) and (\ref{cond:3}) as follows:
\begin{eqnarray}
 &
f_k^{\mathrm{app}}(\tilde{s}_k^n;\mathbf{\tilde{s}}^{-n}_k,\mathbf{\tilde{s}}_k,
\theta)= f_k(\tilde{s}_k^n;\mathbf{\tilde{s}}_k^{-n}), \forall k \in \mathcal{K},& \label{cond_uni:1}\\
  & \nabla f_k^{\mathrm{app}}(\tilde{s}_k^n;\mathbf{\tilde{s}}^{-n}_k,\mathbf{\tilde{s}}_k,
\theta) = \nabla f_k(\tilde{s}_k^n;\mathbf{\tilde{s}}_k^{-n}), \forall k \in \mathcal{K},&
\label{cond_uni:2}\\
& f_k^{\mathrm{app}}(s_k^n;\mathbf{\tilde{s}}^{-n}_k,\mathbf{\tilde{s}}_k,
\theta) \geq
f_k(s_k^n;\mathbf{\tilde{s}}_k^{-n}), \forall s^n_k \in \mathcal{S}_k^n, k \in
\mathcal{K}.& \label{cond_uni:3}
\end{eqnarray}
The corresponding
per-user approximated problem becomes
\begin{equation}\label{eq:usersmpapp}
\begin{array}{cl}
{\displaystyle \min_{s^n_k \in \mathcal{S}_k^n, k \in \mathcal{K}}} &
{\displaystyle \sum_{k \in \mathcal{K}} f_k^{\mathrm{app}}(s_k^n;\mathbf{\tilde{s}}_k^{-n},\mathbf{\tilde{s}}_k,\theta)}\\
\mathrm{subject~to} & {\displaystyle \sum_{k \in \mathcal{K}} s^n_k \leq
\mathbf{P}^{n,\mathrm{tot}}},
\end{array} 
\end{equation}
and its dual decomposed formulation is as follows
\begin{equation}
{\displaystyle \max_{\lambda_n \geq 0}} ~~g^{\mathrm{app}}( \lambda_n)
\label{eq:dualapp} 
\end{equation}
\begin{equation}\label{eq:dualappsub}
\mathrm{~~~~with~~} {\displaystyle g^{\mathrm{app}}( \lambda_n) := -\lambda_n P^{n,\mathrm{tot}}+\sum_{k \in
\mathcal{K}}  \underbrace{\left[ \min_{s_k^n \in \mathcal{S}_k^n}
f_k^{\mathrm{app}}(s_k^n; \mathbf{\tilde{s}}_k^{-n},\mathbf{\tilde{s}}_k,\theta)+\lambda_n
s_k^n \right]}_{(\mathrm{SUB})}}.
\end{equation}

\algsetup{indent=0.4em}
\begin{algorithm}
\caption{Novel class of per-user iterative approximation
methods}\label{algo:novel}
\begin{algorithmic}[1]
\STATE Choose an initial approximation point $\mathbf{\tilde{s}}_k \in
\mathcal{S}_k, k \in \mathcal{K}$
\FOR{outer iterations}
 \FOR{ user $n=1$ to $N$}
  \FOR{inner iterations}
   \STATE Update per-user approximation $f_k^{\mathrm{app}}(s_k^n;\mathbf{\tilde{s}}_k^{-n},\mathbf{\tilde{s}}_k,\theta)$ in $\tilde{s}_k^n, k \in \mathcal{K}$
   \STATE Solve per-user dual problem (\ref{eq:dualapp}) with bisection and approximated subproblems (SUB) in (\ref{eq:dualappsub}), \\to obtain
$s_k^{\mathrm{app},n,*}, k \in \mathcal{K}$
   \STATE Set $\tilde{s}_k^n=s_k^{\mathrm{app},n,*}, k \in \mathcal{K}$
   \ENDFOR 
 \ENDFOR
\ENDFOR
\end{algorithmic}
\end{algorithm}

For this novel class of iterative methods, as given in Algorithm~\ref{algo:novel}, 
the only remaining task is to design improved convex or nonconvex
per-user approximations $f_k^{\mathrm{app}}$ (univariate in each tone) that are both tighter and easier to solve, i.e., in closed-form.
Note that for existing ICA methods the univariate approximations on each
tone cannot be solved in closed-form and one has to resort to an iterative fixed
point approach, where multiple iterations are required to converge to the solution
of the univariate approximation.


Finally, we describe convergence of Algorithm~\ref{algo:novel}. By enforcing 
conditions (\ref{cond_uni:1})-(\ref{cond_uni:3}), convergence to a locally 
optimal solution is guaranteed for Algorithm~\ref{algo:novel} under the condition 
that, when the approximation is built around a non-locally optimal point, the dual optimization step (line 6 of Algorithm~\ref{algo:novel}) results in a solution with a decreased objective function value $f_k^{\mathrm{app}}$. 
The optimal value of each subproblem approximation, i.e., (SUB) in (\ref{eq:dualappsub}), is always 
non-increasing by design. For the dual optimization step, although the 
approximations are allowed to be nonconvex, the theoretical duality gap between
(\ref{eq:usersmpapp}) and (\ref{eq:dualapp}) tends to zero because of the 
large number of independent tones \cite{dual_journal,DSMluo}, i.e., $K$ is very large. 
In our extensive simulation environment, when using the dual decomposition approach, 
we always observed convergence to primal feasible solutions, on each of more than a hundred of realistic large scale DSL scenarios.


\subsection{Design framework for improved approximations}\label{sec:framework}

To obtain a class of convergent algorithms with improved convergence speed
and reduced computational cost, we need a structured way to construct improved univariate
approximations $f_k^{\mathrm{app}}$ for Algorithm~\ref{algo:novel}. 
Therefore we propose a general \emph{design framework} that
can be followed to obtain such improved approximations.


For this, we start from the true 
(nonconvex) per-user objective $f_k(s_k^n;\mathbf{s}_k^{-n})$ as given in the first row of
Table~{\ref{tab:approximations}}, which is a univariate function in $s_k^n$ that needs to be approximated around $s_k^n=\tilde{s}_k^n$. In a first step we decompose this function as a sum of two 
functions as follows 
\begin{equation}\label{eq:decomposition}
 f_k(s_k^n;\mathbf{s}_k^{-n}) = f_k^1(s_k^n;\mathbf{s}_k^{-n},\mathbf{\tilde{s}}_k,\theta) + 
 f_k^2(s_k^n;\mathbf{s}_k^{-n},\mathbf{\tilde{s}}_k,\theta),
\end{equation} 
with 
\begin{eqnarray}
f_k^1(s_k^n;\mathbf{\tilde{s}}_k^{-n},\mathbf{\tilde{s}}_k,\theta)& \in & \mathcal{F}^1(\mathbf{\tilde{s}}_k) \label{eq:condf1}\\ 
f_k^2(s_k^n;\mathbf{\tilde{s}}_k^{-n},\mathbf{\tilde{s}}_k,\theta)& \in & \mathcal{F}^2(\mathbf{\tilde{s}}_k) \label{eq:condf2}
\end{eqnarray}
where the sets of functions $\mathcal{F}^1(\mathbf{\tilde{s}}_k)$ and $\mathcal{F}^2(\mathbf{\tilde{s}}_k)$ 
are defined as follows:
\begin{eqnarray}
\mathcal{F}^1 = \{ \mathrm{univariate~functions~whose~derivative~is~a~rational~function~of    ~degree~at~most~} 3, \mathrm{~as~in~(\ref{eq:rationalf})} \} & &\label{cond:f1} \end{eqnarray}
\begin{equation}\label{eq:rationalf}
 \frac{p_1 (s_k^n)^3+p_2 (s_k^n)^2+p_3 s_k^n+p_4}{
p_5 (s_k^n)^3+p_6 (s_k^n)^2+p_7 s_k^n+p_8 }, \mathrm{~~~defined~by~constants~} p_1 \mathrm{~to~} p_8,
\end{equation}
\begin{eqnarray}
\mathcal{F}^2 = \{ \mathrm{univariate~functions~that~are~upperbounded~by~their~tangent~linear~approximation~in~} s_k^n = \tilde{s}_k^n 
\}. & & \label{cond:f2}
\end{eqnarray}

Set $\mathcal{F}^1$ corresponds to a broad set of functions, where the only constraint is that the derivative
corresponds to a rational function of maximal degree equal to 3, as made explicit in (\ref{eq:rationalf}). 
Set $\mathcal{F}^2$ clearly includes all concave functions, but there exist many non-concave functions that also satisfy
this property, and thus the set of functions $\mathcal{F}^2$ is much larger than the set of concave functions
only. Although there exist an infinite number of decompositions of type (\ref{eq:decomposition}), this
general decomposition provides a design approach for generating improved univariate
approximations, as will become clear later in this section.
Once a specific decomposition satisfying
(\ref{eq:decomposition}) is chosen, the second step is to construct an univariate approximating 
function by linearizing $f_k^2(s_k^n;\mathbf{s}_k^{-n},\mathbf{\tilde{s}}_k,\theta)$ as follows:
\begin{equation}\label{eq:fapp}
\begin{array}{rl}
 f_k^{\mathrm{app}}(s_k^n;\mathbf{s}_k^{-n},\mathbf{\tilde{s}}_k,\theta) & =  f_k^1(s_k^n;\mathbf{s}_k^{-n},\mathbf{\tilde{s}}_k,\theta) + \mathrm{LIN}(f_k^2(s_k^n;s_k^{-n},\mathbf{\tilde{s}}_k,\theta),\\
& = f_k^1(s_k^n;\mathbf{s}_k^{-n},\mathbf{\tilde{s}}_k,\theta) + d_k^n(\theta) s_k^n + e_k^n(\theta),
\end{array}
\end{equation}
where $\mathrm{LIN}(\cdot)$ denotes the first-order linear approximation operator around the
current approximation point $s_k^n=\tilde{s}_k^n$, corresponding to the following explicit choice of constants
$d_k^n(\theta)$ and $e_k^n(\theta)$:
\begin{equation}
\begin{array}{rl}
 d_k^n(\theta) = & \frac{\partial}{\partial s_k^n} (f_k^2(\tilde{s}_k^n;\mathbf{\tilde{s}}_k^{-n},\mathbf{\tilde{s}}_k,\theta)), \\
 e_k^n(\theta) = & f_k^2(\tilde{s}_k^n;\mathbf{\tilde{s}}_k^{-n},\mathbf{\tilde{s}}_k,\theta)-d_k^n(\theta) \tilde{s}_k^n.
 \end{array}
\end{equation}
The following theorem characterizes the properties of the constructed approximating function $f_k^{\mathrm{app}}(s_k^n;s_k^{-n},\mathbf{\tilde{s}}_k,\theta)$.
\begin{theorem}\label{theorem:approx}
 Univariate approximating function $f_k^{\mathrm{app}}(s_k^n;\mathbf{\tilde{s}}_k^{-n},\mathbf{\tilde{s}}_k,\theta)$ satisfies 
 convergence conditions (\ref{cond_uni:1}), (\ref{cond_uni:2}) and (\ref{cond_uni:3}) along the considered dimension $s_k^n$, and its corresponding
subproblems SUB in (\ref{eq:dualappsub}), when following a dual decomposition approach,
can be solved in closed-form with a maximal computational cost equal to
that of solving one cubic equation and performing five function evaluations of (\ref{eq:fapp}).
\end{theorem}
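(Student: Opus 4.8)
The statement has two independent parts: the convergence conditions and the closed-form solvability of the subproblems. I would handle them separately.

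For the \textbf{convergence conditions} (\ref{cond_uni:1})--(\ref{cond_uni:3}), the plan is to exploit the decomposition (\ref{eq:decomposition}) together with the linearization (\ref{eq:fapp}). First I would verify condition (\ref{cond_uni:1}): since $f_k^{\mathrm{app}} = f_k^1 + \mathrm{LIN}(f_k^2)$ and the linear approximation of $f_k^2$ is taken around $s_k^n = \tilde{s}_k^n$, we have $\mathrm{LIN}(f_k^2)(\tilde{s}_k^n) = f_k^2(\tilde{s}_k^n)$ by definition of a first-order Taylor approximation. Adding $f_k^1(\tilde{s}_k^n)$ and invoking the decomposition $f_k = f_k^1 + f_k^2$ then gives $f_k^{\mathrm{app}}(\tilde{s}_k^n) = f_k(\tilde{s}_k^n)$. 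Condition (\ref{cond_uni:2}) follows by the same argument at the level of derivatives: the derivative of $\mathrm{LIN}(f_k^2)$ is the constant $d_k^n(\theta)$, which by construction equals $\tfrac{\partial}{\partial s_k^n} f_k^2(\tilde{s}_k^n)$, so the derivatives of $f_k^{\mathrm{app}}$ and $f_k$ agree at $\tilde{s}_k^n$. Condition (\ref{cond_uni:3}) is where the set $\mathcal{F}^2$ enters: since $f_k^2 \in \mathcal{F}^2$ is by (\ref{cond:f2}) upper-bounded by its tangent linear approximation at $\tilde{s}_k^n$, i.e. $f_k^2(s_k^n) \le \mathrm{LIN}(f_k^2)(s_k^n)$ for all $s_k^n \in \mathcal{S}_k^n$, adding $f_k^1(s_k^n)$ to both sides yields $f_k(s_k^n) \le f_k^{\mathrm{app}}(s_k^n)$, which is exactly the upper-bound condition. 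These three steps are essentially definitional once the membership $f_k^2 \in \mathcal{F}^2$ is used, so I expect them to be routine.

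For the \textbf{closed-form solvability}, the plan is to analyze the subproblem $\min_{s_k^n \in \mathcal{S}_k^n} f_k^{\mathrm{app}}(s_k^n) + \lambda_n s_k^n$ from (\ref{eq:dualappsub}). Setting the derivative to zero gives the first-order stationarity condition $\tfrac{\partial}{\partial s_k^n} f_k^{\mathrm{app}} + \lambda_n = 0$. Here I would use that $f_k^{\mathrm{app}} = f_k^1 + d_k^n s_k^n + e_k^n$, so its derivative is $\tfrac{\partial}{\partial s_k^n} f_k^1 + d_k^n$. Since $f_k^1 \in \mathcal{F}^1$ has, by (\ref{cond:f1})--(\ref{eq:rationalf}), a derivative equal to a rational function whose numerator and denominator are polynomials of degree at most $3$, the stationarity equation takes the form (rational function) $+ d_k^n + \lambda_n = 0$. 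Clearing the denominator turns this into a polynomial equation of degree at most $3$, i.e. a cubic equation, which admits a closed-form solution by Cardano's formula. The remaining work is to account for the box constraint $s_k^n \in [0, s_k^{n,\mathrm{mask}}]$: the minimizer over the interval is attained either at a stationary point or at one of the two endpoints, so I would compare the objective value at the (at most three) real cubic roots that lie inside $\mathcal{S}_k^n$ together with the two boundary points $0$ and $s_k^{n,\mathrm{mask}}$. This comparison requires evaluating $f_k^{\mathrm{app}}$ a bounded number of times; pruning infeasible and complex roots leaves at most five candidates, matching the claimed cost of one cubic solve plus five function evaluations of (\ref{eq:fapp}).

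The \textbf{main obstacle} I anticipate is the bookkeeping in the closed-form step: one must argue carefully that clearing the rational denominator does not introduce spurious roots or lose the true minimizer (e.g. the denominator cannot vanish on the feasible interval because the underlying quantities such as $\mathrm{rec}_k^n$ and $\mathrm{int}_k^n$ are strictly positive for physical transmit powers), and that exactly five function evaluations suffice to select the global minimizer over $\mathcal{S}_k^n$. Establishing that the stationarity equation always reduces to \emph{at most} a cubic — and not a higher-degree polynomial — rests entirely on the degree-$3$ constraint built into the definition of $\mathcal{F}^1$, so the crux is really to confirm that the linearization of $f_k^2$ contributes only an affine (hence derivative-constant) term and therefore does not raise the polynomial degree beyond that already guaranteed by $f_k^1 \in \mathcal{F}^1$.
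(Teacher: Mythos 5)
Your proof is correct and takes essentially the same route as the paper's: you verify conditions (\ref{cond_uni:1})--(\ref{cond_uni:3}) via the value/derivative matching of the linearization and the tangent-upper-bound property of $\mathcal{F}^2$, and you reduce the subproblem to the cubic equation (\ref{eq:polynomial}) arising from the degree-$3$ rational derivative of $f_k^1$, selecting the minimizer by comparing at most three (projected) roots and the two endpoints, exactly as in (\ref{eq:userapp}). Your worry about spurious roots from clearing the denominator is legitimate but harmless in this scheme, since any extraneous candidate is eliminated by the final function-value comparison over the set $\mathcal{X}$.
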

\begin{proof}
 \begin{itemize}
  \item As $f_k^2(s_k^n;\mathbf{\tilde{s}}_k^{-n},\mathbf{\tilde{s}}_k,\theta) \in \mathcal{F}^2(\mathbf{\tilde{s}}_k)$, 
  $f_k^2(s_k^n;\mathbf{\tilde{s}}_k^{-n},\mathbf{\tilde{s}}_k,\theta)$ is upperbounded by $\mathrm{LIN}(f_k^2(s_k^n;\mathbf{\tilde{s}}_k^{-n},\mathbf{\tilde{s}}_k,\theta))$.
  As a result, $f_k^{\mathrm{app}}(s_k^n;\mathbf{\tilde{s}}_k^{-n},\mathbf{\tilde{s}}_k,\theta) \geq 
  f_k(s_k^n;\mathbf{\tilde{s}}_k^{-n}), \forall s_k^n \in \mathcal{S}_k^n$, and thus (\ref{cond_uni:3}) is satisfied.
  \item By definition of $d_k^n(\theta)$, $f_k^2(s_k^n;\mathbf{\tilde{s}}_k^{-n},\mathbf{\tilde{s}}_k,\theta)$ and $d_k^n(\theta) s_k^n + e_k^n(\theta)$ have the 
  same derivative in $s_k^n=\tilde{s}_k^n$, i.e., $\frac{\partial}{\partial s_k^n} f_k^2(\tilde{s}_k^n;\mathbf{\tilde{s}}_k^{-n},\mathbf{\tilde{s}}_k,\theta)
  = d_k^n(\theta)$, and thus (\ref{cond_uni:2}) is satisfied
  along the considered dimension $s_k^n$.
  \item By definition of $e_k^n(\theta)$, $f_k^2(s_k^n;\mathbf{\tilde{s}}_k^{-n},\mathbf{\tilde{s}}_k,\theta)$ and $d_k^n(\theta) s_k^n + e_k^n(\theta)$ have the same
  function value in $s_k^n=\tilde{s}_k^n$, and thus $f_k(\tilde{s}_k^n;\mathbf{\tilde{s}}_k^{-n})= f_k^{\mathrm{app}}(\tilde{s}_k^n;\mathbf{\tilde{s}}_k^{-n},\mathbf{\tilde{s}}_k,\theta)$, i.e., (\ref{cond_uni:1}) is satisfied.
  \item Each subproblem SUB in (\ref{eq:dualappsub}) corresponds to the following optimization problem:
  \begin{equation}\label{eq:fappsub}
   \min_{s_k^n \in \mathcal{S}_k^n} \left[ f_k^{\mathrm{app}}(s_k^n;\mathbf{\tilde{s}}_k^{-n},\mathbf{\tilde{s}}_k,\theta)+\lambda_n s_k^n \right].
  \end{equation}
  Note that the feasible set $\mathcal{S}_k^n$ corresponds to a simple interval (i.e., lower and 
  upper bound constraints). Although the objective function of
  (\ref{eq:fappsub}) is not necessarily convex in $s_k^n$, the minimization
  is over a scalar variable only, and the optimal value can be easily found.
  More specifically, for the considered one dimensional case, the optimal
  solution must be located either where the derivative is zero or at one of the
  two boundary points. For our concrete case, the first order
  optimality condition, i.e, zero derivative, is given as follows:
  \begin{equation}\label{eq:stationaritycondition}
   \nabla_{s_k^n}f_k^{\mathrm{app}}(s_k^n;\mathbf{\tilde{s}}_k^{-n},\mathbf{\tilde{s}}_k,\theta) = 0  \Rightarrow \nabla_{s_k^n}f_k^{1}(s_k^n;\mathbf{\tilde{s}}_k^{-n},\mathbf{\tilde{s}}_k,\theta)+\lambda_n+d_k^n(\theta)=0.
  \end{equation}
  With the help of (\ref{eq:rationalf}), which provides a closed for the derivative of $f^1_k$, this can be reformulated as
  \begin{equation}\label{eq:polynomial}
   (p_1-p_5 (\lambda_n+d_k^n(\theta)))(s_k^n)^3+(p_2-p_6 (\lambda_n+d_k^n(\theta)))(s_k^n)^2+(p_3-p_7 (\lambda_n+d_k^n(\theta)))s_k^n+(p_4-p_8 (\lambda_n+d_k^n(\theta)))=0.
  \end{equation}
  As this corresponds to a cubic equation in one variable ($s_k^n$), it can be solved in closed-form, resulting in at most three roots $r_k^n(1), r_k^n(2)$ and $r_k^n(3)$.
  By comparing the value of the objective function of (\ref{eq:fappsub}) at those of these three roots that are feasible (i.e., belong to $\mathcal{S}_k^n$), as well as at the boundary
  points, i.e., $s_k^n=0$ and $s_k^n=s_k^{n,\mathrm{mask}}$, the optimal value of (\ref{eq:fappsub}) can be obtained.
  Formally, this can be written as
  \begin{equation}\label{eq:userapp}
   s_k^{n,*}:=\argmin_{x \in \mathcal{X}} f_k^{\mathrm{app}}(s_k^n;\mathbf{\tilde{s}}_k^{-n},\mathbf{\tilde{s}}_k,\theta)+\lambda_n x,
  \end{equation}
  with 
  \begin{equation}
    \mathcal{X}=\left\{0,s_k^{n,\mathrm{mask}},[r_k^n(1)]_0^{s_k^{n,\mathrm{mask}}},[r_k^n(2)]_0^{s_k^{n,\mathrm{mask}}},[r_k^n(3)]_0^{s_k^{n,\mathrm{mask}}}\right\}
  \end{equation}
(where $[x]_a^b$ denotes $\max(\min(x,b),a)$, i.e. the projection of $x$ on interval $[a\ b]$). Under our decomposition assumption, each problem (\ref{eq:fappsub}) can therefore be exactly minimized by solving one cubic equation and performing at most five function evaluations. 
 \end{itemize}
\end{proof}

Thus, following this design framework guarantees that the approximating function $f_k^{\mathrm{app}}(s_k^n;\mathbf{\tilde{s}}_k^{-n},\mathbf{\tilde{s}}_k,\theta)$
satisfies the necessary properties so as to obtain a convergent algorithm when using the sequential updates of Algorithm~\ref{algo:novel}.

The proposed decomposition in (\ref{eq:decomposition}) is chosen so 
as to explicitly decouple the design for low computational cost and the
design for small approximation error. More specifically, function $f_k^2(s_k^n;\mathbf{\tilde{s}}_k^{-n},\mathbf{\tilde{s}}_k,\theta)$
does not contribute to the computational cost of solving the approximated
problem (\ref{eq:fappsub}). This is because its influence vanishes to
a constant in the optimality condition after the linearization step in (\ref{eq:fapp}). The computational cost is fully determined
by function $f_k^1(s_k^n;\mathbf{\tilde{s}}_k^{-n},\mathbf{\tilde{s}}_k,\theta)$, more specifically, by the maximal degree of the polynomials
in the numerator and denominator of (\ref{eq:rationalf}). Note that the
rational function can have a degree smaller than 3. In this case, the
resulting polynomial (\ref{eq:polynomial}) corresponds to a quadratic or
even a linear equation for which it is even simpler to solve the
subproblem. Note that this is much simpler than existing ICA methods, 
for which the closed-form solution requires computing the roots of a
polynomial of degree $N$, as mentioned in Section~\ref{sec:ICA}.
Concrete examples will be given in Section \ref{sec:novelmethods}.

The approximation error then again is fully determined by $f_k^2(s_k^n;\mathbf{\tilde{s}}_k^{-n},\mathbf{\tilde{s}}_k,\theta)$
and the corresponding linearization operation. The design guideline is
to choose $f_k^2(s_k^n;\mathbf{\tilde{s}}_k^{-n},\mathbf{\tilde{s}}_k,\theta)$ such that it resembles a linear function as much as possible, so that the approximation error is minimized and the approximation becomes
tighter. Furthermore, the additional parameter
$\theta$ can be tuned to further improve the approximation, while satisfying
the constraints of the decomposition into functions $f_k^1(s_k^n;\mathbf{\tilde{s}}_k^{-n},\mathbf{\tilde{s}}_k,\theta)$ and $f_k^2(s_k^n;\mathbf{\tilde{s}}_k^{-n},\mathbf{\tilde{s}}_k,\theta)$.
More specifically, we propose to choose $\theta$ such that the
absolute value of the second derivative of $f_k^2(s_k^n;\mathbf{\tilde{s}}_k^{-n},\mathbf{\tilde{s}}_k,\theta)$ is minimized in the interval 
$\mathcal{S}_k^n$, 
\begin{equation}\label{eq:tuning}
\theta^* := \argmin_{\theta} \left| \left( \frac{\partial^2}{\partial
(s_k^n)^2} \left(f_k^2(s_k^n;\mathbf{\tilde{s}}_k^{-n},\mathbf{\tilde{s}}_k,\theta)\right) \right)\right|.
\end{equation}
The rationale for this choice is that the smaller (in absolute value) the second derivative is, the closer to linear the function is.

\subsection{Novel Methods}\label{sec:novelmethods}
In this section the design framework proposed in Section~\ref{sec:framework} 
will be used to develop different improved univariate approximations $f_k^{\mathrm{app}}$ .
These approximations can be used in Algorithm~\ref{algo:novel} 
to obtain novel iterative spectrum optimization 
algorithms with faster convergence, reduced computational cost and even
improved achievable data rate performance.

\subsubsection{Iterative Approximation Spectrum Balancing 1
(IASB1)}\label{sec:IASB1}

For our first method, we propose the IASB1 decomposition of Table~\ref{tab:approximations},
where the second and third columns correspond to the two decomposed terms
$f_k^{1,\mathrm{IASB1}}$ and $f_k^{2,\mathrm{IASB1}}$, respectively.
Following the design framework of Section~\ref{sec:framework}, we only have to show that these functions satisfy conditions (\ref{eq:condf1}) and (\ref{eq:condf2}), respectively.
Condition (\ref{eq:condf2}) for $f_k^2$ holds as $f_k^2$ is concave in $s_k^n$: indeed its second derivative is negative for all $s_k^n \in \mathcal{S}_k^n$. Secondly, it can be observed that the derivative
of $f_k^1$ corresponds to a rational function of degree 1 and thus (\ref{eq:condf1}) is
also satisfied. Consequently, applying the linearization step of the design framework
leads to the following approximation 
\begin{equation}
f_k^{\mathrm{app}}(s_k^n;\mathbf{\tilde{s}}_k^{-n},\mathbf{\tilde{s}}_k,\theta)=f_k^{\mathrm{IASB1}}(s_k^n;\mathbf{\tilde{s}}_k^{-n},\mathbf{\tilde{s}}_k,\theta)=f_k^{\mathrm{1,IASB1}}(s_k^n;\mathbf{\tilde{s}}_k^{-n},\mathbf{\tilde{s}}_k,\theta)+d_k^n(\theta) 
s_k^n + e_k^n(\theta),
\end{equation}
that satisfies all the necessary properties, such as the convergence
conditions (\ref{cond_uni:1}), (\ref{cond_uni:2}) and (\ref{cond_uni:3}), as proven by Theorem~\ref{theorem:approx}.
The resulting approximation $f_k^{\mathrm{IASB1}}(s_k^n;\mathbf{\tilde{s}}_k^{-n},\mathbf{\tilde{s}}_k,\theta)$ is convex in $s_k^n$ as $f_k^1(s_k^n;\mathbf{\tilde{s}}_k^{-n},\mathbf{\tilde{s}}_k,\theta)$ is convex in $s_k^n$. 
The following lemma shows that its tightness is improved compared to CA-DSB.
\begin{lemma}\label{lemma:iasb1}
Approximation $f_k^{\mathrm{IASB1}}(s_k^n;\mathbf{\tilde{s}}_k^{-n},\mathbf{\tilde{s}}_k,\theta)$ is tighter than approximation
$f_k^{\mathrm{CADSB}}(s_k^n;\mathbf{\tilde{s}}_k^{-n},\mathbf{\tilde{s}}_k)$.
\end{lemma}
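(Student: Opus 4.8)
The plan is to reduce the comparison between the two upper approximations to a single concavity check. The key observation I would exploit first is that both approximations are built from the \emph{same} exact objective, decomposed as $f_k = f_k^1 + f_k^2$, and that the linearization operator $\mathrm{LIN}(\cdot)$ is linear. Substituting $f_k^2 = f_k - f_k^1$ into $f_k^{\mathrm{app}} = f_k^1 + \mathrm{LIN}(f_k^2)$ and using linearity gives, for either method,
\[
 f_k^{\mathrm{app}}(s_k^n) = \mathrm{LIN}(f_k)(s_k^n) + \bigl[\, f_k^1(s_k^n) - \mathrm{LIN}(f_k^1)(s_k^n) \,\bigr].
\]
Since $\mathrm{LIN}(f_k)$ is common to both IASB1 and CA-DSB, it cancels in the difference, leaving
\[
 f_k^{\mathrm{IASB1}}(s_k^n) - f_k^{\mathrm{CADSB}}(s_k^n) = g(s_k^n) - \mathrm{LIN}(g)(s_k^n), \quad g := f_k^{1,\mathrm{IASB1}} - f_k^{1,\mathrm{CADSB}}.
\]
Thus the sign of the difference is entirely governed by the gap between $g$ and its own tangent line at $\tilde{s}_k^n$.

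Next I would compute $g$ explicitly from the two decompositions in Table~\ref{tab:approximations}. Writing $f_k^{1,\mathrm{IASB1}} = -w_n \log(\mathrm{rec}_k^n) + w_n \log(\mathrm{int}_k^n)$ and $f_k^{1,\mathrm{CADSB}} = -w_n \log(\mathrm{rec}_k^n) - \sum_{m \neq n} w_m \log(\mathrm{rec}_k^m)$, the common user-$n$ received-signal term cancels and one obtains
\[
 g(s_k^n) = w_n \log(\mathrm{int}_k^n) + \sum_{m \neq n} w_m \log(\mathrm{rec}_k^m).
\]
The crucial step is to establish that $g$ is concave in $s_k^n$ on $\mathcal{S}_k^n$. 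The first term is a constant, because $\mathrm{int}_k^n = \sum_{l \neq n} a_k^{n,l} s_k^l + z_k^n$ does not contain $s_k^n$. For each $m \neq n$, the received signal $\mathrm{rec}_k^m = a_k^{m,n} s_k^n + (\text{terms independent of } s_k^n)$ is affine in $s_k^n$ and strictly positive on $\mathcal{S}_k^n$, so $\log(\mathrm{rec}_k^m)$ is concave; a positively weighted sum of concave functions is concave, hence $g$ is concave.

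It then follows immediately that the tangent line $\mathrm{LIN}(g)$ lies above $g$, i.e. $g(s_k^n) \le \mathrm{LIN}(g)(s_k^n)$ for all $s_k^n \in \mathcal{S}_k^n$, with equality at $s_k^n = \tilde{s}_k^n$. Hence $f_k^{\mathrm{IASB1}} - f_k^{\mathrm{CADSB}} = g - \mathrm{LIN}(g) \le 0$ throughout $\mathcal{S}_k^n$, and since both functions are valid upper bounds on the true per-user objective $f_k(s_k^n; \mathbf{\tilde{s}}_k^{-n})$ by~\eqref{cond_uni:3} and Theorem~\ref{theorem:approx}, the smaller one IASB1 is the tighter approximation. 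I expect the only conceptually nontrivial obstacle to be the structural reduction in the first paragraph, namely recognizing that the difference of two approximations sharing the master term $\mathrm{LIN}(f_k)$ collapses to the tangent gap of the single function $g$; once this is seen, the concavity of $g$ reduces to the elementary fact that each interfered received signal depends affinely on $s_k^n$, and the conclusion is immediate.
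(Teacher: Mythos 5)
Your proof is correct, and it reaches the paper's conclusion by a genuinely different route. The paper's own proof works entirely on the $f_k^2$ side: it notes that the approximation error is governed by $f_k^2$, computes both second derivatives explicitly, and observes that
\begin{equation*}
\frac{\partial^2}{\partial (s_k^n)^2}\, f_k^{2,\mathrm{IASB1}}
= \frac{\partial^2}{\partial (s_k^n)^2}\, f_k^{2,\mathrm{CADSB}} \cdot
\left( 1-\left(\frac{\mathrm{int}_k^m}{\mathrm{rec}_k^m}\right)^2\right)
\quad \text{termwise, with the factor lying in } [0,1],
\end{equation*}
so that IASB1's curvature is smaller in magnitude everywhere and its tangent-gap error is therefore smaller. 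You instead exploit that both methods decompose the \emph{same} $f_k$ and that $\mathrm{LIN}(\cdot)$ is a linear operator, which collapses the comparison to the single statement that $g = f_k^{1,\mathrm{IASB1}} - f_k^{1,\mathrm{CADSB}} = w_n \log(\mathrm{int}_k^n) + \sum_{m \neq n} w_m \log(\mathrm{rec}_k^m)$ lies below its tangent; concavity of $g$ then follows structurally (constant plus logs of positive affine functions of $s_k^n$), with no differentiation needed. The two arguments rest on the same underlying fact --- your concavity of $g$ is exactly equivalent to the paper's pointwise ordering of the second derivatives of the $f_k^2$ terms --- but the derivations trade off differently: the paper's explicit computation yields quantitative information (how much tighter IASB1 is, via the factor $(1-(\mathrm{int}_k^m/\mathrm{rec}_k^m)^2)$), whereas your reduction is cleaner and more general, since it shows that comparing \emph{any} two approximations built within the paper's framework from the same $f_k$ reduces to checking concavity (or convexity) of the difference of their $f_k^1$ parts --- a principle that would also streamline the tightness lemmas for IASB2 through IASB10.
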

\begin{IEEEproof}
Within the proposed design framework, the approximation error only depends on 
function $f_k^2(s_k^n;\mathbf{\tilde{s}}_k^{-n},\mathbf{\tilde{s}}_k,\theta)$. Functions $f_k^{2,\mathrm{IASB1}}(s_k^n;\mathbf{\tilde{s}}_k^{-n},\mathbf{\tilde{s}}_k,\theta)$ and $f_k^{2,\mathrm{CA-DSB}}(s_k^n;\mathbf{\tilde{s}}_k^{-n},\mathbf{\tilde{s}}_k)$ 
are both concave. Since first-order conditions coincide
in $s_k^n=\tilde{s}_k^n$, proving an inequality between second derivatives everywhere on the interval will imply that one approximation is tighter than the other one.
Indeed, the smaller the absolute value of the second derivative in the interval, the smaller the 
approximation error, as also highlighted in Section~\ref{sec:framework} in
(\ref{eq:tuning}). 
The corresponding second derivatives are given as follows
\begin{eqnarray}
& & \frac{\partial^2}{\partial (s_k^n)^2}\left(f_k^{2,\mathrm{CA-DSB}}(s_k^n;\mathbf{\tilde{s}}_k^{-n},\mathbf{\tilde{s}}_k,\theta)\right)=-\sum_{m \in \mathcal{N}\backslash n} \frac{w_m (a_k^{m,n})^2}{(\mathrm{int}_k^m)^2},\\
& & \frac{\partial^2}{\partial (s_k^n)^2}\left(f_k^{2,\mathrm{IASB1}}(s_k^n;\mathbf{\tilde{s}}_k^{-n},\mathbf{\tilde{s}}_k,\theta)\right)=-\sum_{m \in \mathcal{N}\backslash n} \frac{w_m (a_k^{m,n})^2}{(\mathrm{int}_k^m)^2}
\underbrace{\left(
1-\left(\frac{\mathrm{int}_k^m}{\mathrm{rec}_k^m}\right)^2\right)}_{(A)}.\label{eq:secondderiasb1}
\end{eqnarray}
As $0 \leq \mathrm{int}_k^m \leq \mathrm{rec}_k^m, \forall s_k^n \in \mathcal{S}_k^n$, factor (A) in (\ref{eq:secondderiasb1})
is non-negative and smaller than $1$. Therefore we have
$0\geq \frac{\partial}{\partial s_k^n}(f_k^{2,\mathrm{IASB1}}(s_k^n;\mathbf{\tilde{s}}_k^{-n},\mathbf{\tilde{s}}_k,\theta)) \geq 
\frac{\partial}{\partial s_k^n}(f_k^{2,\mathrm{CA-DSB}}(s_k^n;\mathbf{\tilde{s}}_k^{-n},\mathbf{\tilde{s}}_k), \forall s_k^n \in \mathcal{S}_k^n$, and thus IASB1
corresponds to a tighter approximation.
\end{IEEEproof}

Besides improved tightness for IASB1, we obtain a significant reduction in
computational cost for solving the subproblems SUB in (\ref{eq:dualappsub}). More specifically, it can be derived
that the corresponding polynomial (\ref{eq:polynomial}) is a simple linear equation, which can be solved 
in closed-form as follows:
\begin{equation}\label{eq:updateIASB1}
s_k^n = \left[ \frac{w_n}{\lambda_n + d_k^n(\theta)} -
\left( \sum_{m \neq n} a_k^{n,m} s_k^m + z_k^n \right)
\right]^{s_k^{n,\mathrm{mask}}}_0
\end{equation}
where $[x]_a^b$ means $\max(\min(x,b),a)$. 
The proposed per-user approximation $f_k^{\mathrm{IASB1}}(s_k^n;\mathbf{\tilde{s}}_k^{-n},\mathbf{\tilde{s}}_k,\theta)$ for IASB1 (\ref{eq:userapp}) 
is thus both tighter than that of CA-DSB, and much easier to solve (in
closed-form) than CA-DSB and SCALE. The computational costs for the subproblems
considered in this paper are summarized in the last column of Table~\ref{tab:approximations}. The
concrete improvement in number of required approximations to converge 
and computational cost observed in practice will be demonstrated in Section \ref{sec:simulations}
for realistic DSL scenarios.

Finally, we want to note that the transmit power formula (\ref{eq:updateIASB1}) of
IASB1 corresponds to that of the DSB algorithm proposed in \cite{dsb}.
However, both update formulas are derived in a fundamentally different
way, where IASB1 gives some important additional insights. More
specifically, it shows that (\ref{eq:updateIASB1}) is the solution of a
convex problem satisfying conditions (\ref{cond_uni:1})-(\ref{cond_uni:3}), which
proves that this per-user iterative update is non-increasing and
converges to a (univariate) local minimum under the sequential iterative
updates, properties which were not known previously for DSB.

\subsubsection{Iterative Approximation Spectrum Balancing 2 (IASB2)}
The per-user approximation of IASB1 $f_k^{\mathrm{IASB1}}(s_k^n;\mathbf{\tilde{s}}_k^{-n},\mathbf{\tilde{s}}_k,\theta)$ can be further 
improved using the decomposition $f_k^n(s_k^n;\mathbf{\tilde{s}}_k^{-n})=f_k^{1,IASB2}(s_k^n;\mathbf{\tilde{s}}_k^{-n},\mathbf{\tilde{s}}_k,L_k^n)+f_k^{2,IASB2}(s_k^n;\mathbf{\tilde{s}}_k^{-n},\mathbf{\tilde{s}}_k,L_k^n)$, 
whose terms are given in the second and third columns of Table~{\ref{tab:approximations}}, respectively.
Compared to IASB1, it introduces an additional quadratic term with
constant leading coefficient $L_k^n$. The reason for adding this term to $f_k^{2,\mathrm{IASB1}}$ is to
decrease the approximation error, by making the absolute value of its
second derivative smaller in the interval $\mathcal{S}_k^n$ compared to that of
IASB1, i.e., $\left|\frac{\partial^2}{\partial (s_k^n)^2}(f_k^{2,IASB2}(s_k^n;\mathbf{\tilde{s}}_k^{-n},\mathbf{\tilde{s}}_k,L_k^n))\right| \leq \left|\frac{\partial^2}{\partial (s_k^n)^2}(f_k^{2,IASB1}(s_k^n;\mathbf{\tilde{s}}_k^{-n},\mathbf{\tilde{s}}_k,\theta))\right|$.
A positive value for $L_k^n$ makes $f_k^{\mathrm{IASB2}}(s_k^n;\mathbf{\tilde{s}}_k^{-n},\mathbf{\tilde{s}}_k,L_k^n)$  a lower
approximation of $f_k^{\mathrm{IASB1}}(s_k^n;\mathbf{\tilde{s}}_k^{-n},\mathbf{\tilde{s}}_k,\theta)$, and thus a better approximation. Note that the value of $L_k^n$
cannot be chosen too large to ensure that $f_k^{\mathrm{2,IASB2}}(s_k^n;\mathbf{\tilde{s}}_k^{-n},\mathbf{\tilde{s}}_k,L_k^n)$ remains
concave in the interval $\mathcal{S}_k^n$, which is required so that $f_k^{2,\mathrm{IASB2}}(s_k^n;\mathbf{s}_k^{-n},\mathbf{\tilde{s}}_k,L_k^n) \in \mathcal{F}^2(\mathbf{\tilde{s}}_k)$. We
propose the following constant positive value for $L_k^n$:
\begin{equation}\label{eq:Lvalue}
L_k^n = {\displaystyle \sum_{m \neq n}} \frac{w_m
\hat{s}_k^m (a_k^{m,n})^2 \left( \hat{s}_k^m+2\sum_{p \neq m} a_k^{m,p}
\hat{s}_k^p+2z_k^m \right)/2} {((\hat{s}_k^m+\sum_{p \neq m} a_k^{m,p}
\hat{s}_k^p+z_k^m)(\sum_{p \neq m} a_k^{m,p} \hat{s}_k^p+z_k^m))^2},
\end{equation}
with $\hat{\mathbf{s}}_k = (s_k^{n,\mathrm{mask}}, \mathbf{\tilde{s}}_k^{-n})$, which
ensures that $f_k^{2,IASB2}(s_k^n;\mathbf{\tilde{s}}_k^{-n},\mathbf{\tilde{s}}_k,L_k^n)$ has minimum curvature.  
The value (\ref{eq:Lvalue}) is obtained by taking the second derivative of $f_k^{2,IASB2}(s_k^n;\mathbf{\tilde{s}}_k^{-n},\mathbf{\tilde{s}}_k,L_k^n)$ and 
choosing $L_k^n$ such that it becomes zero in one point only, namely $s_k^n=s_k^{n,\mathrm{mask}}$, and negative otherwise. One can show that it
corresponds to the closed-form solution of problem (\ref{eq:tuning}), with
$\theta=L_k^n$. After linearization, the resulting univariate approximation $f_k^{IASB2}(s_k^n;\mathbf{\tilde{s}}_k^{-n},\mathbf{\tilde{s}}_k,L_k^n)$
is obtained, which is not necessarily a convex function. One can also derive 
a convex version by adding an additional constraint on tuning parameter $L_k^n$ so that
$f_k^{1,IASB2}(s_k^n;\mathbf{\tilde{s}}_k^{-n},\mathbf{\tilde{s}}_k,L_k^n)$ remains convex. This results in the following value (also proposed in \cite{confTsiaflakis2012})
\begin{equation}\label{eq:Lvalue2}
\begin{array}{rl}
& L_k^n =  \min \left(x,y\right) \\
\mathrm{with} & x = \frac{w_n}{2\left(\hat{s}_k^n+\sum_{m \neq n}
a_k^{n,m} \hat{s}_k^m+z_k^n \right)^2}\\
		       & y = {\displaystyle \sum_{m \neq n}} \frac{w_m
\hat{s}_k^m (a_k^{m,n})^2 \left( \hat{s}_k^m+2\sum_{p \neq m} a_k^{m,p}
\hat{s}_k^p+2z_k^m \right)/2} {((\hat{s}_k^m+\sum_{p \neq m} a_k^{m,p}
\hat{s}_k^p+z_k^m)(\sum_{p \neq m} a_k^{m,p} \hat{s}_k^p+z_k^m))^2},
\end{array}
\end{equation}
with $\hat{\mathbf{s}}_k = (s_k^{n,\mathrm{mask}}, \mathbf{\tilde{s}}_k^{-n})$.

The improved tightness of $f_k^{IASB2}(s_k^n;\mathbf{\tilde{s}}_k^{-n},\mathbf{\tilde{s}}_k,L_k^n)$ with both values for $L_k^n$ (\ref{eq:Lvalue}) and (\ref{eq:Lvalue2})
is stated by the following lemma:
\begin{lemma}
Per-user approximation $f_k^{\mathrm{IASB2}}$ is tighter than
approximations $f_k^{\mathrm{IASB1}}$ and $f_k^{\mathrm{CA-DSB}}$.
\end{lemma}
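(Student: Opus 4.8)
The plan is to exploit the fact that, within the design framework, the IASB2 decomposition differs from IASB1 only by a quadratic term whose value and slope both vanish at the approximation point, so that after linearization the two resulting approximations differ by an explicit nonnegative quantity. Concretely, the term $-L_k^n(s_k^n-\tilde{s}_k^n)^2$ is moved into $f_k^{1,\mathrm{IASB2}}$ (and therefore survives unchanged in $f_k^{\mathrm{app}}=f_k^1+\mathrm{LIN}(f_k^2)$), while the compensating term $+L_k^n(s_k^n-\tilde{s}_k^n)^2$ is placed in $f_k^{2,\mathrm{IASB2}}$, where it is annihilated by $\mathrm{LIN}(\cdot)$ since it has zero value and zero derivative at $s_k^n=\tilde{s}_k^n$. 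I would therefore first establish the pointwise identity
\begin{equation}
f_k^{\mathrm{IASB2}}(s_k^n;\mathbf{\tilde{s}}_k^{-n},\mathbf{\tilde{s}}_k,L_k^n)=f_k^{\mathrm{IASB1}}(s_k^n;\mathbf{\tilde{s}}_k^{-n},\mathbf{\tilde{s}}_k,\theta)-L_k^n(s_k^n-\tilde{s}_k^n)^2 .
\end{equation}

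Because $L_k^n>0$ and $(s_k^n-\tilde{s}_k^n)^2\ge 0$, this identity immediately gives $f_k^{\mathrm{IASB2}}\le f_k^{\mathrm{IASB1}}$ on all of $\mathcal{S}_k^n$. The remaining ingredient is that IASB2 is still a valid upper bound on the true objective, i.e. $f_k^{\mathrm{IASB2}}\ge f_k$; this follows directly from Theorem~\ref{theorem:approx}, whose upper-bound conclusion~(\ref{cond_uni:3}) relies on $f_k^{2,\mathrm{IASB2}}\in\mathcal{F}^2(\mathbf{\tilde{s}}_k)$, i.e. on concavity of $f_k^{2,\mathrm{IASB2}}$ over $\mathcal{S}_k^n$. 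Chaining these two facts yields
\begin{equation}
f_k(s_k^n;\mathbf{\tilde{s}}_k^{-n})\le f_k^{\mathrm{IASB2}}(s_k^n;\mathbf{\tilde{s}}_k^{-n},\mathbf{\tilde{s}}_k,L_k^n)\le f_k^{\mathrm{IASB1}}(s_k^n;\mathbf{\tilde{s}}_k^{-n},\mathbf{\tilde{s}}_k,\theta),\qquad\forall s_k^n\in\mathcal{S}_k^n ,
\end{equation}
so that the approximation error of IASB2 is pointwise no larger than that of IASB1, which is exactly what tightness means here. I would then invoke Lemma~\ref{lemma:iasb1} (IASB1 tighter than CA-DSB) and conclude by transitivity that $f_k\le f_k^{\mathrm{IASB2}}\le f_k^{\mathrm{IASB1}}\le f_k^{\mathrm{CADSB}}$, establishing tightness of IASB2 over CA-DSB as well.

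The main obstacle is verifying the admissibility of the chosen $L_k^n$: one must confirm that the values (\ref{eq:Lvalue}) and (\ref{eq:Lvalue2}) are positive yet small enough to keep $f_k^{2,\mathrm{IASB2}}$ concave on $\mathcal{S}_k^n$, since otherwise the upper-bound property $f_k^{\mathrm{IASB2}}\ge f_k$ (and hence the whole argument) would fail. I would handle this by writing $\frac{\partial^2}{\partial(s_k^n)^2}f_k^{2,\mathrm{IASB2}}=\frac{\partial^2}{\partial(s_k^n)^2}f_k^{2,\mathrm{IASB1}}+2L_k^n$ and checking that this stays $\le 0$ on the interval for both choices: the value (\ref{eq:Lvalue}) is constructed precisely so that the second derivative vanishes only at the worst-case endpoint $s_k^n=s_k^{n,\mathrm{mask}}$ and is negative elsewhere, while the convex variant (\ref{eq:Lvalue2}) takes the $\min$ with an additional quantity $x$ and can therefore only decrease $L_k^n$, preserving concavity \emph{a fortiori}. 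Once admissibility is confirmed for each value of $L_k^n$, the chain of inequalities above applies verbatim and the lemma follows.
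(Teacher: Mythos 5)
Your proof is correct, but it takes a genuinely different route from the paper's. The paper's own proof is a one-line curvature comparison in the style of Lemma~\ref{lemma:iasb1}: since $\frac{\partial^2}{\partial (s_k^n)^2} f_k^{2,\mathrm{IASB2}} = \frac{\partial^2}{\partial (s_k^n)^2} f_k^{2,\mathrm{IASB1}} + 2L_k^n$ with $L_k^n \geq 0$, and this sum remains nonpositive on $\mathcal{S}_k^n$, the concave part of the IASB2 decomposition has a smaller absolute second derivative than that of IASB1; the tightness criterion of Section~\ref{sec:framework} (smaller curvature of $f_k^2$ means smaller linearization error) then concludes, with Lemma~\ref{lemma:iasb1} handling CA-DSB. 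You instead exploit the fact that $\mathrm{LIN}(\cdot)$ annihilates the quadratic term (zero value and zero slope at $s_k^n=\tilde{s}_k^n$) to obtain the exact pointwise identity $f_k^{\mathrm{IASB2}} = f_k^{\mathrm{IASB1}} - L_k^n (s_k^n - \tilde{s}_k^n)^2$, and then sandwich $f_k \leq f_k^{\mathrm{IASB2}} \leq f_k^{\mathrm{IASB1}}$, using Theorem~\ref{theorem:approx} for the left inequality and transitivity with Lemma~\ref{lemma:iasb1} for CA-DSB. What your route buys: an explicit closed-form expression for the tightness gain, and a direct pointwise ordering that avoids the implicit integrate-twice step needed to turn the paper's second-derivative comparison into a pointwise statement (a step you still silently use when you chain Lemma~\ref{lemma:iasb1} for the CA-DSB part, since that lemma is itself proved by curvature comparison). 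What the paper's route buys: brevity and uniformity with the other IASB tightness lemmas. Both proofs ultimately rest on the same admissibility fact, namely that the chosen $L_k^n$ keeps $f_k^{2,\mathrm{IASB2}}$ concave so that (\ref{cond_uni:3}) survives, and your verification — that (\ref{eq:Lvalue}) makes the second derivative vanish only at $s_k^n = s_k^{n,\mathrm{mask}}$, while the $\min$ in (\ref{eq:Lvalue2}) can only decrease $L_k^n$ and hence preserves concavity a fortiori — matches the construction given in the paper's text preceding the lemma.
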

\begin{IEEEproof}
Proof is trivial since $L_k^n \geq 0$ implies that $0\geq \frac{\partial^2}{\partial (s_k^n)^2}(f_k^{2,\mathrm{IASB2}})
= \frac{\partial^2}{\partial (s_k^n)^2}(f_k^{2,\mathrm{IASB1}}) + 2L_k^n \geq 
\frac{\partial^2}{\partial (s_k^n)^2}(f_k^{2,\mathrm{IASB1}}), \forall s_k^n \in \mathcal{S}_k^n,$ and considering Lemma~\ref{lemma:iasb1}.
\end{IEEEproof}

In terms of computational cost to solve the subproblems, it can be derived
that polynomial equation  (\ref{eq:polynomial}) for approximation $f_k^{\mathrm{IASB2}}$
is quadratic. The approximated problem can thus be solved 
in closed-form, by checking the values of two roots only in addition 
to the boundary points, in contrast to three roots as in (\ref{eq:userapp}).

In summary, approximation $ f_k^{\mathrm{IASB2}}$ is tighter than $
f_k^{\mathrm{IASB1}}$, but requires a slightly higher computational cost
to solve the corresponding subproblems in closed-form.

\subsubsection{Iterative Approximation Spectrum Balancing 3
(IASB3)}\label{sec:IASB3}
Our next approximation also improves that of $f_k^{\mathrm{IASB1}}$. 
It starts from the decomposition IASB3 as given in Table~\ref{tab:approximations}.
It fits the design framework as $f_k^{1,\mathrm{IASB3}}$ has a derivative
that corresponds to a rational function of degree 3, i.e., $f_k^{1,\mathrm{IASB3}} \in \mathcal{F}^1$,
and $f_k^{2,IASB3}$ corresponds to a concave function in $s_k^n$ leading to $f_k^{2,IASB3} \in \mathcal{F}^2$.
The resulting approximating function $f_k^{\mathrm{IASB3}}$, after linearization
of $f_k^{2,\mathrm{IASB3}}$, can be nonconvex in $s_k^n$. A concrete illustration
of the nonconvexity will be given in Section~\ref{sec:quality} and Figure~\ref{fig:approximationsADSLJournal}.
The improved tightness of the IASB3 approximation is stated in the following lemma:
\begin{lemma}
Per-user approximation $f_k^{\mathrm{IASB3}}$ is tighter than
approximations $f_k^{\mathrm{IASB1}}$ and $f_k^{\mathrm{CA-DSB}}$.
\end{lemma}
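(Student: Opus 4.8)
The plan is to reuse the structural observation established in the proof of Lemma~\ref{lemma:iasb1}: within the design framework the pointwise approximation error of any method equals $\mathrm{LIN}(f_k^2)-f_k^2\ge 0$ and therefore depends \emph{only} on the term $f_k^2$ that gets linearized. Concretely, since $f_k^{\mathrm{app}}=f_k^1+\mathrm{LIN}(f_k^2)$ while the exact per-user objective is $f_k=f_k^1+f_k^2$, the gap is $f_k^{\mathrm{app}}-f_k=\mathrm{LIN}(f_k^2)-f_k^2$, which is nonnegative on $\mathcal{S}_k^n$ by the defining property of $\mathcal{F}^2$ in~(\ref{cond:f2}) (the tangent upper-bounds $f_k^2$). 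Comparing tightness thus reduces to comparing these gaps, i.e.\ to comparing which part of the objective is relegated to $f_k^2$ rather than kept exactly in $f_k^1$.

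First I would isolate the only difference between the IASB3 and IASB1 decompositions: the single cross term $g_q(s_k^n):=-w_q\log\!\bigl(1+s_k^q/\mathrm{int}_k^q\bigr)$, which sits inside $f_k^{2,\mathrm{IASB1}}$, is moved into the exactly-treated part $f_k^{1,\mathrm{IASB3}}$, so that $f_k^{2,\mathrm{IASB1}}=f_k^{2,\mathrm{IASB3}}+g_q$. Using the linearity of the first-order (tangent) operator $\mathrm{LIN}(\cdot)$ around $\tilde s_k^n$, the linearization error is additive, giving
\[
\mathrm{LIN}(f_k^{2,\mathrm{IASB1}})-f_k^{2,\mathrm{IASB1}}=\bigl[\mathrm{LIN}(f_k^{2,\mathrm{IASB3}})-f_k^{2,\mathrm{IASB3}}\bigr]+\bigl[\mathrm{LIN}(g_q)-g_q\bigr].
\]
The key point is that $g_q$ is concave in $s_k^n$ on $\mathcal{S}_k^n$: this is precisely the fact (already noted in the IASB3 preamble) that $\log(\mathrm{rec}_k^q/\mathrm{int}_k^q)$ is convex in $s_k^n$, since $\mathrm{rec}_k^q$ and $\mathrm{int}_k^q$ are affine in $s_k^n$ with the same slope $a_k^{q,n}$ and differ by the fixed positive quantity $s_k^q$. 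Concavity of $g_q$ forces $\mathrm{LIN}(g_q)-g_q\ge 0$ everywhere, so the IASB1 gap dominates the IASB3 gap pointwise on $\mathcal{S}_k^n$, which proves that IASB3 is tighter than IASB1.

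Finally, tightness is a transitive relation between nonnegative gaps, so combining the above with Lemma~\ref{lemma:iasb1} (IASB1 tighter than CA-DSB) immediately yields that IASB3 is tighter than CA-DSB as well. The only step requiring genuine care is the concavity of the transferred term $g_q$; but this is just the single-index restriction of the second-derivative computation already carried out for $f_k^{2,\mathrm{IASB1}}$ in Lemma~\ref{lemma:iasb1}, so no new obstacle arises. I expect the entire argument to be routine once the gap is expressed as $\mathrm{LIN}(f_k^2)-f_k^2$ and the additivity of the linearization error is invoked.
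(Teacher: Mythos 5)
Your proof is correct and rests on exactly the same observation as the paper's: $f_k^{2,\mathrm{IASB3}}$ is $f_k^{2,\mathrm{IASB1}}$ with the single concave cross term $-w_q\log\left(1+s_k^q/\mathrm{int}_k^q\right)$ removed (moved into the exactly-kept part $f_k^1$), so the linearization error can only decrease, and the comparison with CA-DSB follows by chaining with Lemma~\ref{lemma:iasb1}. The only difference is presentational: the paper states this as a second-derivative magnitude comparison (the criterion established in the proof of Lemma~\ref{lemma:iasb1}), whereas you derive the pointwise gap inequality directly from additivity of the linearization operator and the tangent-above-concave property — both arguments hinge on concavity in $s_k^n$ of the transferred term.
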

\begin{IEEEproof}
Proof is trivial since $\left| \frac{\partial^2}{\partial (s_k^n)^2}
\left( f_k^{2,\mathrm{IASB3}}\right)\right| \leq \left| \frac{\partial^2}{\partial (s_k^n)^2}
\left( f_k^{2,\mathrm{IASB1}}\right)\right|, \forall s_k^n \in \mathcal{S}_k^n,$ as $f_k^{2,\mathrm{IASB3}}$
corresponds to $f_k^{2,\mathrm{IASB1}}$ with one less concave term.
\end{IEEEproof}
As the derivative of $f_k^{1,\mathrm{IASB1}}$ corresponds to a rational
function of degree 3, solving a cubic equation is required to find the
solution of each subproblem SUB in (\ref{eq:dualappsub}) in closed-form, as in (\ref{eq:stationaritycondition}) with $f_k^{\mathrm{app}}=
f_k^{\mathrm{IASB3}}=f_k^{\mathrm{1,IASB3}}+d_k^n(\theta) s_k^n + e_k^n(\theta)$.

We would like to highlight here that the IASB3 method corresponds to the
ASB method of \cite{ASB} if $d_k^n(\theta)$ and $e_k^n(\theta)$ of $f_k^{\mathrm{IASB3}}$ are set to zero. The addition
of $d_k^n(\theta)$ and $e_k^n(\theta)$ however ensures that IASB3 converges to a per-user
local optimum, as opposed to ASB. One huge advantage of IASB3 compared to
CA-DSB, SCALE and IASB1 is that the nonconvex nature of
$f_k^{\mathrm{IASB3}}$ allows to get out of bad locally optimal
solutions, as will be demonstrated in Section~\ref{sec:nonconvexity},
which can result in improvements in the solution data rate performance. 

Finally we would like to mention that a good choice of index $q$ is relevant
and influences the solution quality and convergence speed. The `reference
line' heuristics for ASB provided in \cite{ASB,Leung10} can be used for
this purpose. A further analysis of this choice is a subject of further study.

\subsubsection{Iterative Approximation Spectrum Balancing 4
(IASB4)}\label{sec:IASB4}

This method starts from decomposition IASB4 in Table~\ref{tab:approximations},
where the constants $\alpha_k^q(\mathbf{\tilde{s}}_k)$ and $c_k^q(\mathbf{\tilde{s}}_k)$ are chosen so that the following
inequality holds
\begin{equation}\label{eq:iasb4app}
\begin{array}{c}
-w_q\log(1+\frac{s_k^q}{\mathrm{int}_k^q}) \leq -w_q \alpha_k^q(\mathbf{\tilde{s}}_k)
\log(\frac{s_k^q}{\mathrm{int}_k^q})-c_k^q(\mathbf{\tilde{s}}_k),
\end{array}
\end{equation}
with equality holding at one point, namely for $s_k^n=\tilde{s}_k^n$. Note that
the same inequality is also exploited by the SCALE algorithm. The above
constants can be determined easily in closed-form, e.g., the closed-form
expression for $\alpha_k^n(\mathbf{\tilde{s}}_k)$ is given in (\ref{eq:scaleconstant}).
The derivative of $f_k^{1,\mathrm{IASB4}}$ corresponds to a rational function
of degree 2, i.e., $f_k^{1,\mathrm{IASB4}} \in \mathcal{F}^1$, and the corresponding polynomial (\ref{eq:polynomial})
to solve each subproblem (\ref{eq:dualappsub}) is a simple quadratic which can be solved
in closed-form. The following lemma proves that $f_k^{2,\mathrm{IASB4}}$
can be upperbounded by a tangent linear function, and thus $f_k^{2,\mathrm{IASB4}} \in \mathcal{F}^2$:
\begin{lemma}
 $f_k^{2,\mathrm{IASB4}}$ is upperbounded by the (unique) linear function tangent at the approximation point $s_k^n=\tilde{s}_k^n$.
\end{lemma}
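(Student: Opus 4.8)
The plan is to view $f_k^{2,\mathrm{IASB4}}$ as a univariate function of $x:=s_k^n$ (all $s_k^m$, $m\neq n$, being fixed), to regroup its terms into a concave part plus a single ``paired'' term, and to treat each separately. The subtlety I would flag at the outset is that $f_k^{2,\mathrm{IASB4}}$ is \emph{not} concave: the substituted term $+w_q\alpha_k^q(\mathbf{\tilde{s}}_k)\log(s_k^q/\mathrm{int}_k^q)$ is convex in $x$ (since $\mathrm{int}_k^q$ is affine increasing in $x$ and $\alpha_k^q\ge 0$), so one cannot simply invoke concavity to obtain the tangent upper bound.

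First I would isolate the unambiguously concave terms. Reading $f_k^{2,\mathrm{IASB4}}$ from Table~\ref{tab:approximations} and splitting off the index $m=q$ from the sum $-\sum_{m\in\mathcal{N}\backslash n} w_m\log(1+s_k^m/\mathrm{int}_k^m)$, the leftover terms $C(x):=-\sum_{m\in\mathcal{N}\backslash\{n,q\}} w_m\log(1+s_k^m/\mathrm{int}_k^m)$ form a concave function of $x$. This is exactly the factor-$(A)$ computation already used in Lemma~\ref{lemma:iasb1}, cf.~(\ref{eq:secondderiasb1}): each summand has second derivative $-w_m(a_k^{m,n})^2\big[(\mathrm{int}_k^m)^{-2}-(\mathrm{rec}_k^m)^{-2}\big]\le 0$ because $0\le\mathrm{int}_k^m\le\mathrm{rec}_k^m$. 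Being concave, $C$ lies below its tangent at $\tilde{s}_k^n$.

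The heart of the argument is the paired term
\[
 h(x):= w_q\alpha_k^q(\mathbf{\tilde{s}}_k)\log\!\left(\tfrac{s_k^q}{\mathrm{int}_k^q}\right) - w_q\log\!\left(1+\tfrac{s_k^q}{\mathrm{int}_k^q}\right),
\]
which collects the substituted convex term together with the genuine user-$q$ rate term (the $m=q$ summand). Here I would invoke inequality (\ref{eq:iasb4app}), regarded not as a pointwise fact but as a \emph{global} bound in $x$ that is tight at the approximation point: it yields $h(x)\le -c_k^q(\mathbf{\tilde{s}}_k)$ for all $x\in\mathcal{S}_k^n$, with equality at $x=\tilde{s}_k^n$. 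Moreover, the standard SCALE choice (\ref{eq:scaleconstant}), $\alpha_k^q=\tilde{s}_k^q/\mathrm{rec}_k^q$, makes the two sides of (\ref{eq:iasb4app}) agree to first order at $\tilde{s}_k^n$, so by the chain rule $h'(\tilde{s}_k^n)=0$. Hence the tangent of $h$ at $\tilde{s}_k^n$ is the horizontal line $y=h(\tilde{s}_k^n)=-c_k^q(\mathbf{\tilde{s}}_k)$, and $h(x)\le -c_k^q(\mathbf{\tilde{s}}_k)$ shows $h$ lies below this tangent — even though $h$ is itself not concave.

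Finally, since $f_k^{2,\mathrm{IASB4}}=h+C+(\text{constant})$ and the first-order (tangent) operator is linear, the tangent of $f_k^{2,\mathrm{IASB4}}$ at $\tilde{s}_k^n$ is the sum of the tangents of $h$ and $C$ (additive constants shift function and tangent identically and are harmless); as each of $h$ and $C$ lies below its own tangent on $\mathcal{S}_k^n$, so does their sum. Therefore $f_k^{2,\mathrm{IASB4}}$ is upperbounded by its unique tangent linear function at $s_k^n=\tilde{s}_k^n$, i.e.\ $f_k^{2,\mathrm{IASB4}}\in\mathcal{F}^2(\mathbf{\tilde{s}}_k)$. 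The main obstacle — and the genuinely new point relative to Lemma~\ref{lemma:iasb1} — is recognizing that concavity fails and that the convex SCALE term must be paired with the user-$q$ rate term, so that the global tightness-with-tangency of (\ref{eq:iasb4app}) plays the role that concavity plays for the remaining terms.
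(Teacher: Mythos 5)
Your proof is correct and follows essentially the same route as the paper's: the same split of $f_k^{2,\mathrm{IASB4}}$ into the concave remainder (the sum over $m \in \mathcal{N}\backslash\{n,q\}$) handled by its tangent, and the paired term built from inequality~(\ref{eq:iasb4app}), which is globally nonpositive relative to its value at $\tilde{s}_k^n$ and tight there. Your only addition is to verify explicitly, via the first-order tangency of the SCALE bound~(\ref{eq:scaleconstant}), that $h'(\tilde{s}_k^n)=0$ so that the constant bound really is the tangent line -- a detail the paper leaves implicit when it says part (C) ``can be linearly upperbounded by the constant function with constant value zero.''
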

\begin{IEEEproof}
First we rewrite $f_k^{2,\mathrm{IASB4}}$:
\begin{equation}\label{eq:IASB4_parts}
 f_k^{2,\mathrm{IASB4}} =  \underbrace{+w_q \alpha_k^q(\mathbf{\tilde{s}}_k) \log
\left( \frac{s_k^q}{\mathrm{int}_k^q}\right) + c_k^n(\mathbf{\tilde{s}}_k) 
- w_q \log \left( 1+\frac{s_k^q}{\mathrm{int}_k^q}\right)}_{(C)} \underbrace{\displaystyle {-\sum_{m \in \mathcal{N}\backslash \{n,q\}}} \left[ w_m \log \left(1+\frac{s_k^m}{\mathrm{int}_k^m} \right)\right]}_{(D)}.
\end{equation}
Part (D) of (\ref{eq:IASB4_parts}) is concave in $s_k^n$ and can thus be upperbounded by a linear tangent function in $s_k^n=\tilde{s}_k^n$.
Based on the inequality (\ref{eq:iasb4app}) it can be seen that part (C) is strictly negative, except in the point $s_k^n=\tilde{s}_k^n$
where it has zero value. Thus, part (C) can be linearly upperbounded by the constant function with constant value zero.
As a result, the overall function $f_k^{2,\mathrm{IASB4}}$ can be upperbounded by a linear tangent in $s_k^n = \tilde{s}_k^n$, 
and thus $f_k^{2,\mathrm{IASB4}} \in \mathcal{F}^2$.
\end{IEEEproof}
We emphasize here that part (C) of (\ref{eq:IASB4_parts}) is not
concave. This justifies our definition of function set $\mathcal{F}^2$, instead of simply restricting ourselves to concave functions.

The following lemma characterizes the approximation tightness:
\begin{lemma}\label{lemma:iasb4app}
 Approximation $f_k^{\mathrm{IASB4}}$ is less tight than approximation
 $f_k^{\mathrm{IASB3}}$ and more tight than approximation $f_k^{\mathrm{IASB1}}$.
\end{lemma}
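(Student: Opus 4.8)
The plan is to compare the three methods through their pointwise approximation error $E_X(s_k^n) := f_k^{\mathrm{app},X}(s_k^n;\mathbf{\tilde{s}}_k^{-n},\mathbf{\tilde{s}}_k,\theta) - f_k(s_k^n;\mathbf{\tilde{s}}_k^{-n})$, a smaller (nonnegative) error meaning a tighter approximation, exactly in the spirit of Lemma~\ref{lemma:iasb1}. As stressed in Section~\ref{sec:framework}, because every method shares the same $f_k^1 + f_k^2$ splitting of the true objective, this error equals the \emph{linearization gap} of the concave-like part, $E_X = G[f_k^{2,X}]$ with $G[f] := \mathrm{LIN}(f) - f \geq 0$ for $f \in \mathcal{F}^2$. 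The engine of the whole argument is that $G[\cdot]$ is a \emph{linear} operator (both $\mathrm{LIN}$ and the identity are linear), so $E_X$ is additive over any splitting of $f_k^2$ into summands.

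First I would write the three $f_k^2$ terms from Table~\ref{tab:approximations} in a common form that isolates their shared part. All three contain the concave block $(D) = -\sum_{m \in \mathcal{N}\setminus\{n,q\}} w_m \log(1+\tfrac{s_k^m}{\mathrm{int}_k^m})$, and they differ only in how the index-$q$ contribution is treated: IASB1 keeps $\ell := -w_q\log(1+\tfrac{s_k^q}{\mathrm{int}_k^q})$ inside $f_k^2$; IASB3 moves it entirely into $f_k^1$, so $f_k^{2,\mathrm{IASB3}} = (D)$; and IASB4 replaces it by the non-concave term $(C) = \ell + m$ with $m := w_q\alpha_k^q(\mathbf{\tilde{s}}_k)\log(\tfrac{s_k^q}{\mathrm{int}_k^q}) + c_k^n(\mathbf{\tilde{s}}_k)$, exactly as in \eqref{eq:IASB4_parts}. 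Linearity of $G$ then gives $E_{\mathrm{IASB1}} = G[\ell]+G[(D)]$, $E_{\mathrm{IASB3}} = G[(D)]$ and $E_{\mathrm{IASB4}} = G[(C)]+G[(D)] = G[\ell]+G[m]+G[(D)]$, so the shared $G[(D)]$ cancels in each comparison and the lemma collapses to the two scalar facts $G[(C)] \geq 0$ and $G[m] \leq 0$.

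For the first fact, which yields $E_{\mathrm{IASB3}} = E_{\mathrm{IASB4}} - G[(C)] \leq E_{\mathrm{IASB4}}$ (IASB4 less tight than IASB3), I would reuse the sub-argument from the preceding lemma: by the SCALE inequality \eqref{eq:iasb4app}, $(C) \leq 0$ with both value and first-order tangency at $s_k^n = \tilde{s}_k^n$, so its tangent linearization is the zero function and $G[(C)] = -(C) \geq 0$. For the second fact, giving $E_{\mathrm{IASB4}} = E_{\mathrm{IASB1}} + G[m] \leq E_{\mathrm{IASB1}}$ (IASB4 more tight than IASB1), I would observe that along the dimension $s_k^n$ one has $m = w_q\alpha_k^q[\log\tilde{s}_k^q - \log(a_k^{q,n}s_k^n + \kappa)] + c_k^n$ for a positive constant $\kappa = \sum_{p \neq q,n} a_k^{q,p}\tilde{s}_k^p + z_k^q$, i.e. a nonnegative multiple ($w_q\alpha_k^q \geq 0$ since $\alpha_k^q = \tilde{s}_k^q/\mathrm{rec}_k^q \in [0,1]$) of $-\log(\text{affine})$ plus a constant, hence convex in $s_k^n$; convexity means $m$ lies above its tangent, so $G[m] = \mathrm{LIN}(m) - m \leq 0$.

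I expect the main obstacle to be conceptual rather than computational: because the index-$q$ block $(C)$ of IASB4 is \emph{not} concave — a point the paper deliberately highlights after \eqref{eq:IASB4_parts} — the ``compare the second derivatives'' argument used for IASB1 and IASB3 no longer applies, and one must instead route the comparison through the linearization-gap functional $G$ and exploit its additivity. Once that tool is in place, the two remaining ingredients, namely the sign of $(C)$ inherited from the SCALE bound and the convexity of $m$ as a negative logarithm of an affine function of $s_k^n$, are both routine.
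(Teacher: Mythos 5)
Your proposal is correct, and at its core it follows the same route as the paper's own proof: both reduce the comparison to the $f_k^2$ parts, both settle the IASB3 comparison through the sign of part (C) of \eqref{eq:IASB4_parts} (nonpositive by the SCALE inequality \eqref{eq:iasb4app}, with value and slope zero at $s_k^n=\tilde{s}_k^n$, so its linearization is the zero function), and both settle the IASB1 comparison through the convexity in $s_k^n$ of the SCALE-type term $m$. Where you differ is in the bookkeeping, and your version is in fact more watertight. The paper argues the IASB1 comparison by asserting that adding the convex term makes ``the absolute value of the second derivative'' of $f_k^{2,\mathrm{IASB4}}$ smaller than that of $f_k^{2,\mathrm{IASB1}}$; strictly speaking this need not hold pointwise, because $f_k^{2,\mathrm{IASB4}}$ is not concave (a fact the paper itself emphasizes right after \eqref{eq:IASB4_parts}), so the added curvature $m''\geq 0$ could in principle overshoot $\left|(f_k^{2,\mathrm{IASB1}})''\right|$ somewhere in $\mathcal{S}_k^n$ --- and in any case the second-derivative criterion was justified in Lemma~\ref{lemma:iasb1} for pairs of concave functions sharing first-order data. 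Your gap functional $G[f]=\mathrm{LIN}(f)-f$, together with its additivity over the splitting $\ell+m+(D)$ and the two elementary facts $G[(C)]=-(C)\geq 0$ and $G[m]\leq 0$ for convex $m$, yields the pointwise chain $E_{\mathrm{IASB3}}\leq E_{\mathrm{IASB4}}\leq E_{\mathrm{IASB1}}$ with no concavity assumption whatsoever, so it reaches the paper's conclusion while repairing the one loose step in its argument.
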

\begin{IEEEproof}
The approximation error is determined by the relative tightness of $f_k^{2,\mathrm{IASB4}}$
versus that of $f_k^{2,\mathrm{IASB1}}$ and $f_k^{2,\mathrm{IASB3}}$.
The difference between $f_k^{2,\mathrm{IASB4}}$ and $f_k^{2,\mathrm{IASB3}}$ corresponds
to part (C) of (\ref{eq:IASB4_parts}) which is negative for $s_k^n \neq \tilde{s}_k^n$.
This means that $f_k^{\mathrm{2,IASB4}}$ is a lower bound for $f_k^{\mathrm{2,IASB3}}$,
and thus has a larger approximation error after linearization.
The difference between $f_k^{2,\mathrm{IASB4}}$ and $f_k^{2,\mathrm{IASB1}}$ corresponds
to the first term of (\ref{eq:IASB4_parts}), which is convex in $s_k^n$. 
The addition of this additional convex term in $f_k^{2,\mathrm{IASB4}}$
makes the absolute value of its second derivative smaller than that of $f_k^{2,\mathrm{IASB1}}$, 
resulting in a tighter approximation.
\end{IEEEproof}

We clearly see a trade-off between approximation tightness and computational
cost (i.e. polynomial degree) for the methods IASB1, IASB3 and IASB4. 
Similarly to IASB3, the IASB4 method corresponds to an approximation
which is not necessarily convex in $s_k^n$. In Section~\ref{sec:nonconvexity}
it will be shown that this allows to escape from a
bad locally optimal solution when choosing a proper value for $q$.\\

\subsubsection{Iterative Approximation Spectrum Balancing 5
(IASB5)}\label{sec:IASB5}
The decomposition for method IASB5 is given in Table~\ref{tab:approximations}. The derivative of $f_k^{1,\mathrm{IASB5}}$
corresponds to a rational function of degree 3, resulting in a cubic equation to be solved for each
of the subproblems (\ref{eq:dualappsub}). Its tightness, after linearization, is characterized as follows:
\begin{lemma}
Approximation $f_k^{\mathrm{IASB5}}$ is tighter than $f_k^{\mathrm{IASB1}}$ and $f_k^{\mathrm{IASB4}}$.
\end{lemma}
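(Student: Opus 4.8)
The plan is to reuse the tightness criterion established in Section~\ref{sec:framework}: all three methods share the same true per-user objective $f_k(s_k^n;\mathbf{\tilde{s}}_k^{-n})$ and differ only in the split $f_k = f_k^1 + f_k^2$, so the approximation error equals the linearization gap $\mathrm{LIN}(f_k^2) - f_k^2 \ge 0$ of the corresponding $f_k^2$. Because the methods all linearize at the same point $s_k^n = \tilde{s}_k^n$, this gap is controlled by the second derivative of $f_k^2$, exactly as exploited in (\ref{eq:tuning}) and in the proofs of Lemma~\ref{lemma:iasb1} and Lemma~\ref{lemma:iasb4app}. I would therefore reduce both claimed inequalities to a pointwise comparison of $\frac{\partial^2}{\partial (s_k^n)^2} f_k^{2,\mathrm{IASB5}}$ against the corresponding quantities for IASB1 and IASB4.

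First I would isolate the structural difference between the relevant $f_k^2$ terms. Reading off Table~\ref{tab:approximations}, $f_k^{2,\mathrm{IASB5}}$ coincides with $f_k^{2,\mathrm{IASB1}} = -\sum_{m \neq n} w_m \log(1+\frac{s_k^m}{\mathrm{int}_k^m})$ except for the two extra SCALE-substitution terms $\sum_{p \in \{q,t\}} w_p \alpha_k^p(\mathbf{\tilde{s}}_k)\log(\frac{s_k^p}{\mathrm{int}_k^p})$ (plus additive constants), and it coincides with $f_k^{2,\mathrm{IASB4}}$ except for the single extra term indexed by $t$. The key computation is that each such term, viewed as a univariate function of $s_k^n$, is convex: since $\mathrm{int}_k^p = a_k^{p,n} s_k^n + \mathrm{const}$ is affine and increasing in $s_k^n$, the map $s_k^n \mapsto \log(\frac{s_k^p}{\mathrm{int}_k^p}) = \log s_k^p - \log \mathrm{int}_k^p$ is convex, and the factor $w_p \alpha_k^p(\mathbf{\tilde{s}}_k) \ge 0$ keeps it convex, with second derivative $+\frac{w_p \alpha_k^p(\mathbf{\tilde{s}}_k) (a_k^{p,n})^2}{(\mathrm{int}_k^p)^2} \ge 0$.

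Combining these observations gives both inequalities at once. Since $f_k^{2,\mathrm{IASB5}}$ equals $f_k^{2,\mathrm{IASB4}}$ plus one convex term and equals $f_k^{2,\mathrm{IASB1}}$ plus two convex terms, I obtain the pointwise inequalities $\frac{\partial^2}{\partial (s_k^n)^2} f_k^{2,\mathrm{IASB5}} \ge \frac{\partial^2}{\partial (s_k^n)^2} f_k^{2,\mathrm{IASB4}}$ and $\frac{\partial^2}{\partial (s_k^n)^2} f_k^{2,\mathrm{IASB5}} \ge \frac{\partial^2}{\partial (s_k^n)^2} f_k^{2,\mathrm{IASB1}}$ throughout $\mathcal{S}_k^n$ — precisely the shrinking of the second derivative used in the IASB4-versus-IASB1 part of Lemma~\ref{lemma:iasb4app}. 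By the linearization-gap argument above, IASB5 is therefore tighter than both IASB4 and IASB1.

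The step I expect to be the main obstacle is the rigorous justification that a pointwise second-derivative inequality really does imply an inequality between linearization gaps; the earlier lemmas invoke this only through the informal ``smaller absolute value of the second derivative'' heuristic, which can fail when a function changes the sign of its curvature. I would handle it cleanly with the integral form of the Taylor remainder, $\mathrm{LIN}(g)(s) - g(s) = \int (s - v)\,(-g''(v))\,dv$, a nonnegatively weighted integral of $-g''$ on both sides of $\tilde{s}_k^n$; this makes the implication hold with no concavity assumption on $f_k^2$, which matters here because the convex SCALE-substitution terms mean $f_k^{2,\mathrm{IASB5}}$ need not be concave. A secondary piece of bookkeeping is to check that the additive constants $c_k^n(\mathbf{\tilde{s}}_k)$ align across the three decompositions so that the pairwise differences reduce to the stated convex terms alone; since constants do not affect second derivatives, they are immaterial to tightness.
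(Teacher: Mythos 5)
Your proposal is correct and follows essentially the same route as the paper's proof: the extra SCALE-substitution term(s) in $f_k^{2,\mathrm{IASB5}}$ are convex in $s_k^n$, which raises its second derivative pointwise relative to $f_k^{2,\mathrm{IASB4}}$ and $f_k^{2,\mathrm{IASB1}}$ and hence shrinks the linearization gap. The only differences are minor: the paper obtains the IASB1 comparison by transitivity through Lemma~\ref{lemma:iasb4app} rather than directly, and it phrases the key step with the informal ``smaller absolute value of the second derivative'' heuristic, which your integral-remainder argument replaces with a rigorous signed comparison --- a worthwhile refinement, since $f_k^{2,\mathrm{IASB5}}$ is indeed not concave.
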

\begin{IEEEproof}
As $w_t \log \alpha_k^t(\tilde{\mathbf{s}}_k) \log \left( \frac{s_k^t}{\mathrm{int}_k^t}\right)$ is convex in $s_k^n$, 
the absolute value of the second derivative of $f_k^{\mathrm{IASB5}}$ is 
smaller than that of $f_k^{\mathrm{IASB4}}$. Consequently, the approximation $f_k^{\mathrm{IASB5}}$ is tighter than 
that of $f_k^{\mathrm{IASB4}}$, as well as that of $f_k^{\mathrm{IASB1}}$ given Lemma~\ref{lemma:iasb4app}
\end{IEEEproof}

Similarly to IASB3 and IASB4, the IASB5 method involves an approximation
which is not necessarily convex in $s_k^n$, allowing to tackle the issue
of getting stuck in bad locally optimal solutions when choosing good
values for $q$ and $t$. 

\subsubsection{Iterative Approximation Spectrum Balancing 6 (IASB6)}
This method adds one parameterized term to $f_k^{1,\mathrm{IASB1}}$ and
$f_k^{2,\mathrm{IASB1}}$ to obtain the IASB6 decomposition as given in 
Table~\ref{tab:approximations}. $f_k^{1,\mathrm{IASB6}}$ has a derivative
equal to a rational function of degree 1, similarly to $f_k^{1,\mathrm{IASB1}}$,
resulting in a linear equation to solve each subproblem (\ref{eq:dualappsub}). More specifically,
its closed-form solution corresponds to the following:
\begin{equation}\label{eq:updateIASB6}
s_k^n = \left[ \frac{w_n(1-\beta_k^n)}{\lambda_n + d_k^n(\theta)} -
\left( \sum_{m \neq n} a_k^{n,m} s_k^m + z_k^n \right)
\right]^{s_k^{n,\mathrm{mask}}}_0.
\end{equation}
Note that the parameter $\beta_k^n$ has no impact on the computational cost
of the closed-form solution (\ref{eq:updateIASB6}). However $\beta_k^n$ has an
impact on the approximation error through $f_k^{2,\mathrm{IASB6}}$. As proposed
in (\ref{eq:tuning}) we can tune $\theta = \beta_k^n$ so as to minimize the approximation
error. For this we define the following concrete optimization problem:
\begin{equation}\label{eq:tuningIASB6}
 \beta_k^{*,n} := \argmin_{\beta_k^n} \left| \frac{\partial^2}{\partial (s_k^n)^2} 
 \left( f_k^{2,\mathrm{IASB6}} \right)\right| {~\mathrm{s.t.}~} \frac{\partial^2}{\partial (s_k^n)^2} 
 \left( f_k^{2,\mathrm{IASB6}} \right) \leq 0, \forall s_k^n \in \mathcal{S}_k^n,
\end{equation}
which minimizes the absolute value of the second derivative of $f_k^{2,\mathrm{IASB6}}$
while remaining concave so as to satisfy the design framework constraint, i.e., 
$f_k^{2,\mathrm{IASB6}} \in \mathcal{F}^2$. As we target closed-form solutions for our methods
we derive the following inequality:
\begin{equation}\label{eq:inequalityIASB6}
 \beta_k^n \leq \sum_{m \in \mathcal{N}\backslash n} \underbrace{(\mathrm{rec}_k^n)^2}_{(A)} 
  \underbrace{\frac{w_m s_k^m (a_k^{m,n})^2 }{w_n}
  \frac{1}{\mathrm{rec}_k^m \mathrm{int}_k^m}
  \left( \frac{1}{\mathrm{rec}_k^m} + \frac{1}{\mathrm{int}_k^m}\right)}_{(B)}, \forall s_k^n \in \mathcal{S}_k^n,
\end{equation}
which is obtained by taking the second derivative of $f_k^{2,\mathrm{IASB6}}$,
restricting it to be negative and extracting $\beta_k^n$ to the left side of the
inequality sign. A closed-form value for $\beta_k^n$ that satisfies (\ref{eq:inequalityIASB6})
can then be found by taking value $s_k^n=0$ for part (A) and
$s_k^n=s_k^{n,\mathrm{mask}}$ for part (B). We note that this is not the
only feasible value nor the optimal value as given by (\ref{eq:tuningIASB6}),
but presents the advantage of being computable in closed-form.

For $\beta_k^n=0$, IASB6 corresponds to IASB1. For any value $\beta_k^n > 0$, IASB6 
improves its tightness compared to that of IASB1. As $\beta_k^n$ cannot be
negative the resulting convex approximation $f_k^{\mathrm{IASB6}}$ improves that of
$f_k^{\mathrm{IASB1}}$, with similar computational cost for solving the
subproblems.

%

\subsubsection{Iterative Approximation Spectrum Balancing 7 (IASB7)}
Decomposition IASB7, as given in Table~\ref{tab:approximations}, has a 
$f_k^{1,\mathrm{IASB7}}$ whose derivative is a rational function of degree
3, restricting the computational cost of solving each of the subproblems
(\ref{eq:dualappsub}) to that of solving a cubic equation. Thus
$f_k^{1,\mathrm{IASB7}} \in \mathcal{F}^1$. $f_k^{2,\mathrm{IASB7}}$ can be
chosen to be concave by tuning $\beta_k^n$ as follows:
\begin{equation}\label{eq:inequalityIASB7}
 \beta_k^n = \sum_{m \in \mathcal{N}\backslash \{n,q\}} \underbrace{(\mathrm{rec}_k^n)^2}_{(A)} 
  \underbrace{\frac{w_m s_k^m (a_k^{m,n})^2 }{w_n}
  \frac{1}{\mathrm{rec}_k^m \mathrm{int}_k^m}
  \left( \frac{1}{\mathrm{rec}_k^m} + \frac{1}{\mathrm{int}_k^m}\right)}_{(B)},
\end{equation}
with $s_k^n=0$ for part (A) and $s_k^n=s_k^{n,\mathrm{mask}}$ for part (B).
As $\beta_k^n \geq 0$ the resulting (non-convex) approximating function $f_k^{\mathrm{IASB7}}$
improves that of IASB3.

\subsubsection{Iterative Approximation Spectrum Balancing 8 (IASB8)}
Decomposition IASB8 from Table~\ref{tab:approximations} adds a parameterized
term to decomposition IASB5, which can be tuned to improve the tightness
of the approximation $f_k^{\mathrm{IASB8}}$. $f_k^{1,\mathrm{IASB8}}$ has
a derivative that is a rational function of degree 3, resulting in a cubic
equation to solve each subproblem in (\ref{eq:dualappsub}), thus $f_k^{1,\mathrm{IASB8}} \in \mathcal{F}^1$. 
$f_k^{2,\mathrm{IASB8}}$ differs from $f_k^{2,\mathrm{IASB5}}$ in the first term, which can be tuned to linearly upper bound $f_k^{2,\mathrm{IASB8}}$ as follows:
\begin{equation}
 \beta_k^n = \min \left(0,\underbrace{\frac{(\mathrm{rec}_k^n)^2}{w_n}}_{(A)}
 \underbrace{\left( 
 \sum_{m \in \mathcal{N}\backslash n} 
 \frac{w_m s_k^m (a_k^{m,n})^2}{\mathrm{rec}_k^m \mathrm{int}_k^m} \left(\frac{1}{\mathrm{rec}_k^m}+\frac{1}{\mathrm{int}_k^m}\right)
 - \sum_{p \in \{q,t\}} \frac{w_p \alpha_k^p (a_k^{p,n})^2}{(\mathrm{int}_k^p)^2}\right)}_{(B)}\right),
\end{equation}
with $s_k^n=0$ for part (A) and $s_k^n=s_k^{n,\mathrm{mask}}$ for part (B).
The resulting approximating function $f_k^{\mathrm{IASB8}}$ is nonconvex.

\subsubsection{Iterative Approximation Spectrum Balancing 9 (IASB9)}
Decomposition IASB9, given in Table~\ref{tab:approximations}, adds a
different parameterized term to IASB5 compared to that of IASB8. $f_k^{1,\mathrm{IASB9}}$
has a derivative that is a rational function of degree 3, so that the computational cost to solve the corresponding subproblems in (\ref{eq:dualappsub}) is similar to that of IASB3, IASB5, IASB7 and IASB8. 

Tuning parameter $\theta=L_k^n$ does not impact the computational cost. It only 
impacts the tightness where the following value is chosen such that 
$f_k^{2,\mathrm{IASB9}}$ can be linearly upperbounded
\begin{displaymath}
 L_k^n=\frac{-w_q \alpha_k^q (a_k^{q,n})^2}{2 (\mathrm{int}_k^q)^2}+
 \sum_{m \in \mathcal{N} \backslash n} \frac{w_m s_k^m (a_k^{m,n})^2}{2 \mathrm{rec}_k^m \mathrm{int}_k^m} 
 \left(\frac{1}{\mathrm{rec}_k^m}+\frac{1}{\mathrm{int}_k^m} \right),
\end{displaymath}
with $s_k^n=s_k^{n,\mathrm{mask}}$. Note that this is a closed-form 
optimal solution to (\ref{eq:tuning}). The resulting approximating function
$f_k^{\mathrm{IASB9}}$ is nonconvex.

\subsubsection{Iterative Approximation Spectrum Balancing 10 (IASB10)}
We refer to our last decomposition as IASB10. We want to highlight though
that it does not fit within the proposed design framework in terms of
computational cost to solve the subproblems in (\ref{eq:dualappsub}), i.e., $f_k^{1,\mathrm{IASB10}} \notin \mathcal{F}^1$. More specifically,
it requires solving a polynomial of degree N, which is equal to the
computational cost required for solving the subproblems of existing
ICA methods (CA-DSB as well as SCALE). The improved tightness is characterized
by the following lemma:
\begin{lemma}
 Approximation $f_k^{\mathrm{IASB10}}$ is tighter than $f_k^{IASB1}$, 
 $f_k^{IASB4}$, $f_k^{IASB5}$, $f_k^{CA-DSB}$ and $f_k^{SCALE}$.
\end{lemma}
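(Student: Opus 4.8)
The plan is to exploit a structural fact shared by all six approximations in Table~\ref{tab:approximations}: each of them upper-bounds the true per-user objective $f_k(s_k^n;\mathbf{\tilde{s}}_k^{-n})$ and is tangent to it at $s_k^n=\tilde{s}_k^n$, i.e. they all satisfy conditions (\ref{cond_uni:1})--(\ref{cond_uni:3}) along $s_k^n$ (for IASB10 this still holds because $f_k^{2,\mathrm{IASB10}}\in\mathcal{F}^2$, even though $f_k^{1,\mathrm{IASB10}}\notin\mathcal{F}^1$). Hence \emph{tighter} means nothing more than \emph{pointwise smaller upper bound}, and to prove the lemma it suffices to show $f_k^{\mathrm{IASB10}}(s_k^n)\le f_k^{X}(s_k^n)$ on $\mathcal{S}_k^n$ for each competitor $X$. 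Writing $d:=f_k^{2,\mathrm{IASB10}}-f_k^{2,X}$ and using $f_k^{1}+f_k^{2}=f_k$ for every method, a one-line computation gives $f_k^{\mathrm{IASB10}}-f_k^{X}=\mathrm{LIN}(d)-d$. The entire lemma therefore reduces to establishing $\mathrm{LIN}(d)-d\le 0$ in each of the five cases.

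The first step covers $X\in\{\text{IASB1},\text{IASB4},\text{IASB5},\text{CA-DSB}\}$ by showing that $d$ is convex in $s_k^n$, which immediately yields $d\ge\mathrm{LIN}(d)$. For this I would read $d$ off Table~\ref{tab:approximations}: against IASB1, IASB4 and IASB5 everything cancels except the SCALE surrogates of the crosstalk terms that IASB10 replaces but $X$ does not, so $d=\sum_{m}\big[w_m\alpha_k^m(\mathbf{\tilde{s}}_k)\log(s_k^m/\mathrm{int}_k^m)+c_k^m\big]$ summed over all $m\neq n$ (IASB1), all $m\neq n$ but $q$ (IASB4), or all $m\neq n$ but $q,t$ (IASB5); against CA-DSB one gets $d=-\sum_{m\neq n}\big[w_m\alpha_k^m\log\mathrm{int}_k^m+w_m\log\mathrm{rec}_k^m\big]$. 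In each case $d$ is a nonnegative combination of functions convex in $s_k^n$: each $\log(s_k^m/\mathrm{int}_k^m)$ has second derivative (in $s_k^n$) equal to $(a_k^{m,n})^2/(\mathrm{int}_k^m)^2\ge 0$, and $-\log$ of the affine quantities $\mathrm{int}_k^m,\mathrm{rec}_k^m$ is convex, while the additive constants $c_k^m$ are irrelevant. This is exactly the curvature mechanism already used in Lemma~\ref{lemma:iasb1} and Lemma~\ref{lemma:iasb4app}, now phrased via the difference $d$ rather than $f_k^2$ itself.

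The second step, $X=\mathrm{SCALE}$, is the delicate case and the main obstacle, because the convexity-of-difference argument above \emph{fails} here: the relevant $d$ is the single direct-signal gap, whose second derivative $w_n\alpha_k^n/(s_k^n)^2-w_n/(s_k^n+\mathrm{int}_k^n)^2$ changes sign on $\mathcal{S}_k^n$, so $d$ is neither convex nor concave. Instead I would compare the final approximations directly. Since $f_k^{2,\mathrm{IASB10}}$ is a sum of SCALE gaps, each $\le 0$ and tangent to $0$ at $\tilde{s}_k^n$, its linearization vanishes; consequently $f_k^{\mathrm{IASB10}}$ and $f_k^{\mathrm{SCALE}}$ coincide on every crosstalk term and differ only on user $n$'s own term, where IASB10 keeps it exactly as $-w_n\log(1+s_k^n/\mathrm{int}_k^n)$ while SCALE substitutes its tangent surrogate $-w_n\alpha_k^n\log(s_k^n/\mathrm{int}_k^n)-c_k^n$. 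The defining SCALE inequality (\ref{eq:iasb4app}), applied to index $n$, states precisely that the exact term is pointwise below the surrogate, which gives $f_k^{\mathrm{IASB10}}\le f_k^{\mathrm{SCALE}}$.

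The hard part is therefore recognizing that the uniform second-derivative comparison that settles the four convex-$f_k^2$ methods cannot be reused for SCALE, and that the SCALE case must instead be reduced to the global tangent inequality defining the SCALE surrogate on the single direct-signal term. The remaining work, namely identifying which crosstalk indices survive in each $d$ and checking convexity term by term, is routine bookkeeping.
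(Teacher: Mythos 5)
Your proof is correct and follows essentially the same route as the paper: the paper likewise handles the SCALE case via the tangent inequality (\ref{eq:iasb4app}) (showing $f_k^{\mathrm{2,IASB10}}$ upper-bounds $f_k^{\mathrm{2,SCALE}}$ with equality at $\tilde{s}_k^n$), and handles IASB5 by observing that $f_k^{\mathrm{2,IASB10}}$ equals $f_k^{\mathrm{2,IASB5}}$ plus extra convex terms, then disposes of IASB1, IASB4 and CA-DSB by transitivity through the earlier tightness lemmas. Your only deviations --- verifying convexity of the difference $d$ directly for all four non-SCALE competitors instead of invoking transitivity, and making the underlying mechanism explicit via the identity $f_k^{\mathrm{IASB10}}-f_k^{X}=\mathrm{LIN}(d)-d$ --- are organizational rather than substantive.
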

\begin{IEEEproof}
It is enough to prove that $f_k^{\mathrm{IASB10}}$ is tighter than 
$f_k^{\mathrm{IASB5}}$ and $f_k^{\mathrm{SCALE}}$, as $f_k^{\mathrm{IASB5}}$
is tighter than $f_k^{IASB1}$, $f_k^{IASB4}$, and $f_k^{CA-DSB}$.
Similarly to the proof of Lemma~{\ref{lemma:iasb4app}}, inequality
(\ref{eq:iasb4app}) can be used to prove that $f_k^{\mathrm{2,IASB10}}$
is an upper bound of $f_k^{\mathrm{2,SCALE}}$ with equality at $s_k^n=\tilde{s}_k^n$.
As a result $f_k^{\mathrm{IASB10}}$ is tighter than $f_k^{SCALE}$. 
Furthermore, $f_k^{\mathrm{2,IASB10}}$ consists of $f_k^{\mathrm{2,IASB5}}$
with some extra convex terms, making it less concave. As a result, the approximation
$f_k^{\mathrm{IASB10}}$ is tighter than $f_k^{\mathrm{IASB5}}$.
\end{IEEEproof}

\subsection{Analysis of novel methods and generalization}\label{sec:analysismethods}
Relations between the ten proposed methods of Section~\ref{sec:novelmethods} 
are important so as to understand the real trade-off between computational
cost and efficiency. Although some of these relations are already
proven and discussed in Section~\ref{sec:novelmethods}, we provide an overview
relation graph in Figure~\ref{fig:relationgraph}. The vertical axis refers
to the computational cost expressed as the polynomial degree required to
solve the subproblems (SUB) in (\ref{eq:dualappsub}) in closed-form. 
We observe some variation for each fixed degree as some approximations require the computation of additional parameters.
For instance IASB1 and IASB6 both belong to the polynomial degree 1 class,
but IASB6 requires the computation of the $\beta_k^n$ parameters and is
therefore located at a lower level (with higher computational cost) compared to that of IASB1. The horizontal
axis refers to the approximation efficiency measured as the average number
of approximations required to converge to a locally optimal solution. Only a rough ordering is
provided, based on the empirical simulation data of Section~\ref{sec:simulations}.
Arrows in the graph indicate which methods dominate others in terms of tightness. For instance, IASB2 is
dominated by IASB9 which means that the IASB9 approximation is tighter than 
the IASB2 approximation. The domination order follows successive arrows, which means
that IASB9 also dominates IASB1 and CA-DSB. Note that we only obtain a partial ordering, e.g.,
some pairs of methods (such as IASB2 and IASB5) cannot be easily compared as they involve 
different approximation approaches. One important observation is that all
proposed methods are better than existing methods CA-DSB and SCALE, both in terms
of computational cost and approximation efficiency, with the only exception
of IASB10 that has a similar computational cost.

\begin{figure}[!t]
\centering
\includegraphics[width=0.95\columnwidth]{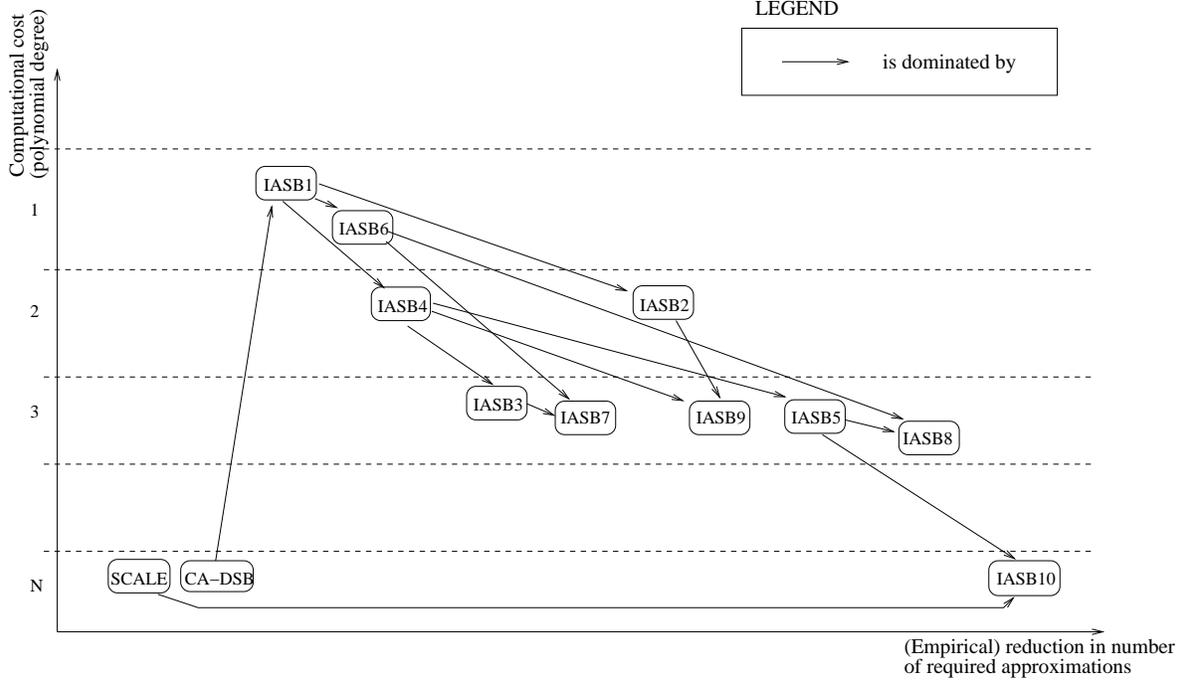}
\caption{Trade-off between computational cost (polynomial degree) and approximation efficiency (estimated empirically by the number of required approximations observed in Section~\ref{sec:simulations}, see also Table~\ref{tab:performance}).}
\label{fig:relationgraph}
\end{figure}

Although we propose a seemingly large number of different methods, we can offer some insight on their design process, and even further generalize it. More specifically, one can distinguish 
different types of approximation terms in $f_k^1$. A first type is
the concave quadratic term of $f_k^{2,\mathrm{IASB2}}$ with parameter
$L_k^n$. We will refer to this approximation term by '$L$'.
A second type corresponds to the second term of $f_k^{1,\mathrm{IASB4}}$, which
is based on inequality (\ref{eq:iasb4app}). We will refer to this approximation
term by '$\alpha$'. Note that IASB5 has two of such approximations terms, which
we will refer to as '$\alpha^2$'. A third type of approximation term is given by the second term
of $f_k^{1,\mathrm{IASB3}}$, which can be seen as a reference line, and will
be referred to as '$r$'. Finally, $f_k^{1,\mathrm{IASB6}}$, $f_k^{1,\mathrm{IASB7}}$ and $f_k^{1,\mathrm{IASB8}}$ use a '$\beta$' approximation term.
Based on these four types of approximation terms, we can rename our ten proposed
approximation methods as follows, from IASB1 to IASB10 respectively:
IA1, IA2-L, IA3-$r$, IA2-$\alpha$, IA3-$\alpha^2$, IA1-$\beta$, IA3-$\beta$r,
IA3-$\beta\alpha^2$, IA3-$\alpha$L, and IA$N$-$\alpha^{N-1}$,
where \emph{IA} refers to \emph{iterative approximation}, the first number 
denotes the polynomial degree and the following letters refer to the approximation terms used. 
This naming is more descriptive and allows to see
the order of Figure~\ref{fig:relationgraph} more clearly. This also allows 
the easy construction of other iterative approximation methods,
such as for instance IA3-$\beta\alpha$L and IA$N$-$\beta\alpha^{N-1}$, which
are better than IASB9 and IASB10, respectively.

In our design framework, the construction of an approximation method can be seen as a kind of budget allocation problem, where the budget is the degree of the polynomial equation to be solved at each iteration. All methods start with a polynomial degree equal to one. 
Adding the '$\beta$' approximation term results in no increase of polynomial degree but requires an additional
parameter computation. The 'L' approximation term adds one to the degree and requires
an additional parameter computation. The 'r' approximation term adds two to the degree
but doesn't require additional parameter computations. The '$\alpha$' approximation
term adds one to the degree and requires an additional parameter computation (and, similarly '$\alpha^2$' adds two to the degree).
Given a certain budget of polynomial degree, one can allocate different
approximation terms to obtain a per-user approximation. Degrees greater than three are also possible (such as for the last method IASB10), at the cost of increased computational work to solve the polynomial equation. Note that, in principle, polynomial equations of degree four can still be solved in closed form.

\subsection{Iterative Fixed Point Update Implementation}\label{sec:iterativefixedpoint}
As mentioned in Section~\ref{sec:ICA}, CA-DSB and SCALE cannot solve their 
corresponding univariate approximations in closed-form, as this would require solving for the roots of
polynomials of degree $N$. As a result, they have to resort to iterative 
fixed point updates. Similar iterative fixed point updates
can be derived for our proposed methods. 
Following \cite{dsb,scalejournal}, 
this can be done by starting from (\ref{eq:stationaritycondition}) and isolating the occurrence of $s_k^n$ in the first term of $f_k^{1,\mathrm{app}}$ in Table~\ref{tab:approximations}. Moving it to the left side of the equation sign leads to a fixed point update of the form $s_k^n = [h(s_k^n;\mathbf{\tilde{s}}_k^{-n}),\mathbf{\tilde{s}}_k]^{s_k^{n,\mathrm{mask}}}_0$, 
with $h$ denoting the right side function in $s_k^n$, and with projection 
on $\mathcal{S}_k^n$. For instance, these fixed point updates for IASB1 and IASB6 correspond 
to (\ref{eq:updateIASB1}) and (\ref{eq:updateIASB6}), respectively.
The drawbacks of these fixed point updates  
compared to the closed-form approach are (i) that one does not know how many 
updates are required to converge, (ii) that convergence to a global, or even to a local optimum 
of the univariate approximation is not always guaranteed, and (iii) that 
the fixed point update approach for IASB3, IASB4, IASB7, IASB8 and IASB9 
loses the beneficial ability to get out of bad locally optimal solutions,
as will be demonstrated in more detail in Section~\ref{sec:nonconvexity}. 
Nevertheless, the performance of the fixed point update approach for the methods proposed in this paper will also be assessed in the simulation Section~\ref{sec:simulations}.

\subsection{Tackling nonconvexity with nonconvex univariate approximations}\label{sec:nonconvexity}

One important advantage of using nonconvex univariate approximations is
that the issue of getting stuck in a bad locally optimal solution can
be tackled up to a certain degree. This concept is illustrated in Figure~\ref{fig:nonconvexApproximations},
where the original univariate nonconvex function, a univariate convex and 
a nonconvex approximation are plotted. When starting from the given
approximation point, one can see that the iterative approximation approach
with the convex approximation will result in convergence to the local 
minimum. Whereas, when using the nonconvex approximation with closed-form
solution, one will converge instead to the global optimum with a significantly
better objective function value. This somehow tackles nonconvexity of the original SO problem. We emphasize however that although this increases the probability to converge to the global optimal solution, it does not give a certificate for global optimality. The improved performance of some of the proposed methods using nonconvex approximations, such as IASB3 and IASB5, is demonstrated
in Section~\ref{sec:quality}.

\begin{figure}[ht]
\centering
\scalebox{0.65}{\input{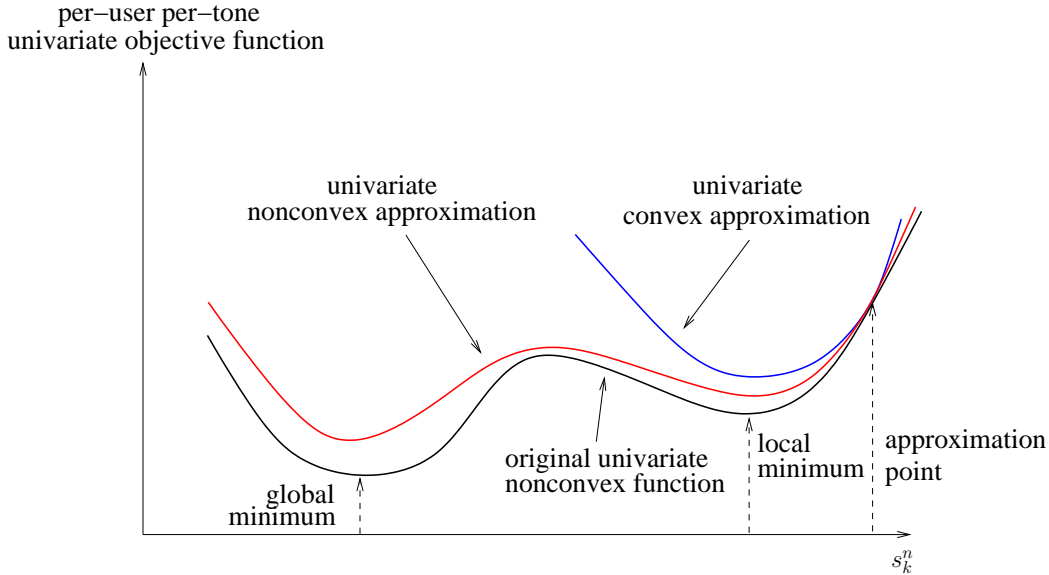}}
\caption{Tackling nonconvexity with a nonconvex univariate approximation.}
\label{fig:nonconvexApproximations}
\end{figure}

\subsection{Frequency and user selective allocation of approximations}\label{sec:allocationapproximations}

One additional degree of freedom of the proposed novel class of methods, i.e., 
Algorithm~\ref{algo:novel}, is that different approximations can be 
chosen for different tones and users. Choosing the same type of univariate approximation for each tone and user is not required as for existing ICA methods. A careful allocation of the type
of approximation over the different users and tones can in fact result
in improved solution quality with a reduced computational cost. A concrete
example of heterogeneous allocation of univariate approximation functions will be elaborated for the
DSL specific case in Section~\ref{sec:optimalallocation}, where some allocation heuristics
are derived.

\section{Simulations}\label{sec:simulations}
In this section the performance of the proposed iterative approximation 
methods will be compared to that of existing ICA methods in terms of 
convergence speed, computational cost and solution quality. For our simulations
we use a realistic DSL simulator, validated by practice and
aligned with DSL standards \cite{dsm,ETSI:ts101}. We consider 24~AWG twisted copper
pair lines. Maximum transmit power is 20.4 dBm for the ADSL and ADSL2+
scenarios, and 11.5 dBm for the VDSL scenarios. SNR gap is chosen at
12.9 dB, corresponding to a coding gain of 3~dB, a noise margin of 6~dB, and
a target symbol error probability of $10^{-7}$. Tone spacing is 4.3125 kHz. 
DMT symbol rate is 4 kHz. The weights $w_n$ are chosen equal for all users
($n=1\ldots N$), namely $w_n=1/N$, unless specified otherwise.

\subsection{Improved Convergence Speed}
To properly evaluate the performance of the methods under different settings 
we consider 10 different DSL scenarios, each of them consisting of
a different loop topology and DSL flavour. These scenarios are summarized 
in Table~\ref{tab:scenarios}. Here, 'CO-distances' denotes the distance of the lines from
the central office (CO), which is zero when the lines start from the central office (otherwise they start from a remote terminal). `Line lengths' denotes the lengths
of each of the lines, and 'DSL flavour' denotes the used DSL flavour 
as well as the transmission direction (upstream (US) or downstream (DS)).
As some methods also require the usage of reference lines, we indicate
these in the column 'Reference Lines'. 
The first number denotes the first reference line, also
denoted with index $q$, and the second number denotes the second reference line,
also denoted with the index $t$. However, when the considered
user $n$ during a given iteration is the same as one of the reference lines, i.e., 
$n=q$ or $n=t$, that reference line is changed to that corresponding to the third number indicated.
The performance of all proposed approximations is validated against optimal solutions computed with high accuracy ($0.1$~dBm) using one-dimensional exhaustive searches. These extensive simulations took several weeks to complete, which restricts us from considering even more simulation scenarios. However we want to highlight that the considered
DSL scenarios reflect more than 43000 different per-user per-tone approximations, 
and can thus be considered to give a good averaged idea of the performance
improvement of the proposed methods for general DSL scenarios.

\begin{table}
\caption{DSL scenarios}
 \label{tab:scenarios}
\begin{center}
\begin{tabular}{ | c | c | c | c | c |}
  \hline
  Scenario & Line lengths [m] & CO-distances [m] & DSL flavour & Reference Lines\\
  \hline 
  1 & 5000,4600,4200,3800,3400,3000, & 0,0,0,0,0,0, & 12-user DS ADSL2+ & [5 6 7]\\
  & 2600,2200,1800,1400,1000,600 & 1000,1000,1000,1000,1000,1000 & & \\
  \hline
  2 & 5000,4600,4200,3800,3400,3000, & 0,0,0,0,0,0, & 12-user DS ADSL2+ & [5 6 7]\\
  & 2600,2200,1800,1400,1000,600 & 0,0,0,0,0,0 & &\\
  \hline
  3 & 5000,4000,3000,2000,2000,1000, & 0,0,500,500,1000,1000, & 12-user DS ADSL2+ & [5 6 7]\\
  & 4800,3800,2800,2300,1500,1300 & 0,0,600,600,1200,1200 & &\\
  \hline
  4 & 5000,4000,3000,2000,2000,1000, & 0,0,1000,1000,2000,2000, & 12-user DS ADSL2+ & [5 6 7]\\
  & 4800,3800,2800,2300,1500,1300 & 0,0,1200,1200,2400,2400 & &\\
  \hline
  5 & 1200,1000,800,600,450,300& 0,0,0,0,0,0 & 6-user US VDSL& [1 2 3]\\
  \hline
  6 & 3000,3000,3000,3000,3000,3000,3000 & 0,0,0,0,0,0,0 & 7-user DS ADSL& [5 6 7]\\
  \hline
  7 & 5000,4000,3500,3000,3000,2500,3000 & 0,0,500,500,3000,3000,3000 & 7-user DS ADSL& [5 6 7]\\
  \hline
  8 & 5000,3000 & 0,3000 & 2-user DS ADSL& [1 2 3]\\
  \hline
  9 & 1200,600,600,600& 0,0,0,0 & 4-user US VDSL & [1 2 3]\\
  \hline
  10 & 1200,900,600,300,300,300 & 0,0,0,0,0,0 & 6-user US VDSL& [1 2 3]\\
  \hline
\end{tabular}
\end{center}
\end{table}

We started all methods from the same all-zero starting point. Since some methods may converge to different solutions, we only report in this section runs where all methods converged to the same per-user per-tone optimum, to obtain a fair comparison. Variations in the solution quality will be evaluated separately in Section~\ref{sec:quality}.

A first result of this extensive simulation setup is given in Table~\ref{tab:performance}.
For all existing and novel methods, the second column gives the average number of per-user per-tone approximations (averaged over all DSL scenarios of Table~\ref{tab:scenarios}) 
to obtain convergence within $0.1$~dBm accuracy, compared to the solution computed by exhaustive search.
%
One can see that the proposed methods require much fewer approximations to
converge compared to the existing SCALE method. Convergence is also faster 
compared to the existing CA-DSB method. In particular, methods IASB5, IASB8 and
IASB10 that use two reference lines have a significantly improved convergence
speed, e.g., IASB10 requires twice as fewer approximations compared to CA-DSB, and four times fewer
approximations compared to SCALE, while requiring the same computational cost
for closed-form solution as was highlighted in Table~\ref{tab:approximations}.
It can also be seen that IASB2 is better than IASB1 as predicted by the 
theory. Similarly IASB6 is better than IASB1, IASB7 is better than IASB3, 
IASB8 is better than IASB5, and IASB9 is better than IASB4. In general
the average number of approximations varies between 1 and 2.5 for the novel
methods. 

To highlight the variance of the number of approximations we refer to 
Figure~\ref{fig:CDFapproximations}, where the cumulative distribution function (CDF)
is given for the number of approximations (averaged over all DSL scenarios 
of Table~\ref{tab:scenarios}). It can be seen that the proposed methods
have a strictly better performance compared to existing methods. Their number of appproximations varies between 1 and 6, in contrast to SCALE for which sometimes even up to 20 approximations are necessary for convergence.
Especially the IASB10 method has a very good performance, requiring at most
two approximations to converge; IASB5 and IASB8 also perform well.
One interesting observation is that on average $62.5\%$ of all cases only require one
approximation to converge. For the remaining $37.5\%$, the number of required
approximations varies. In particular,  we see for SCALE that if the first iteration is not enough, at least four or more iterations are required. 
Furthermore, one can see that the
differences between the methods are not extremely large. This is actually because of the averaging over multiple DSL scenarios.
In Figure~\ref{fig:approximationsADSLJournal} we show for instance the CDF of the 
number of approximations averaged only for scenario 7 of Table~\ref{tab:scenarios}.
Here much larger differences can be observed between the methods.
One can also see here that IASB10 and the methods using multiple reference
lines IASB5 and IASB8 perform best. This comes at the cost of solving
a polynomial of degree N, degree 3 and degree 3, respectively. The methods
IASB1 and IASB6, which only require solving a polynomial of degree 1, also
perform quite well, and can thus be seen as very efficient methods.


\begin{table}
\caption{Average number of approximations and fixed point update
iterations for different iterative approximation methods averaged over
all scenarios of Table~\ref{tab:scenarios}.}
\label{tab:performance}
\begin{center}
\begin{tabular}{ | c | c | c | }
  \hline
   Method & Avg number of  & Avg number of fixed \\
   & per-user approximations & point update iterations\\
   \hline 
   CA-DSB & 2.626 & 2.816\\
   \hline 
   SCALE & 5.630 & 6.581\\
   \hline 
   IASB1 & 2.461 & 2.461\\
   \hline 
   IASB2 & 2.414 & 2.423\\
   \hline 
   IASB3 & 2.419 & 2.461\\
   \hline 
   IASB4 & 2.421 & 2.457\\
   \hline 
   IASB5 & 2.192 & 2.480\\
   \hline
   IASB6 & 2.459 & 2.459\\
   \hline
   IASB7 & 2.416 & 2.458\\
   \hline
   IASB8 & 2.190 & 2.478\\
   \hline
   IASB9 & 2.391 & 2.432\\
   \hline
   IASB10 & 1.262 & 1.999\\
   \hline
\end{tabular}
\end{center}
\end{table}

\begin{figure}[!t]
\centering
\includegraphics[width=0.65\columnwidth]{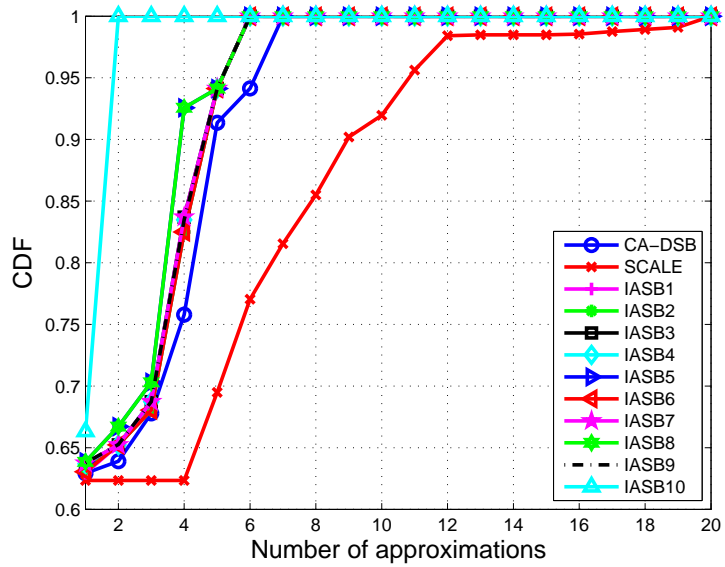}
\caption{Cumulative distribution function (CDF) of the average number of approximations required to converge
(averaged over all scenarios of Table~\ref{tab:scenarios}).}
\label{fig:CDFapproximations}
\end{figure}

\begin{figure}[!t]
\centering
\includegraphics[width=0.65\columnwidth]{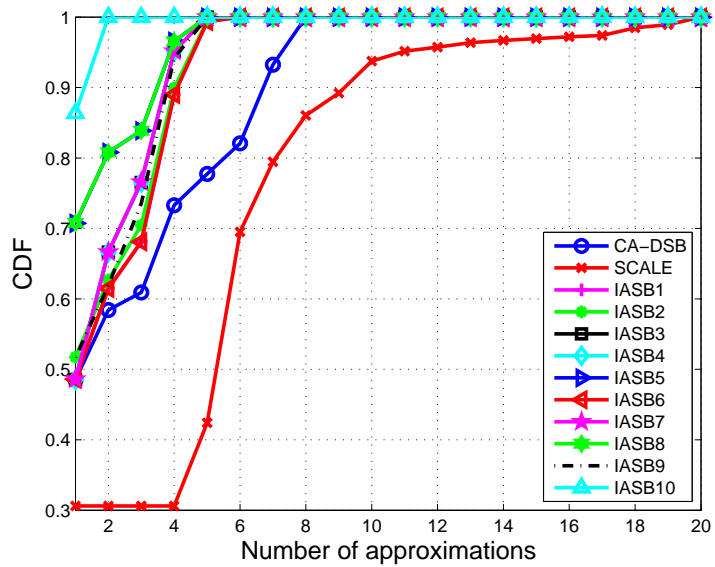}
\caption{Cumulative distribution function (CDF) of the average number of approximations required to converge
for scenario 7 of Table~\ref{tab:scenarios}}
\label{fig:approximationsADSLJournal}
\end{figure}

\begin{figure}[!t]
\centering
\includegraphics[width=0.65\columnwidth]{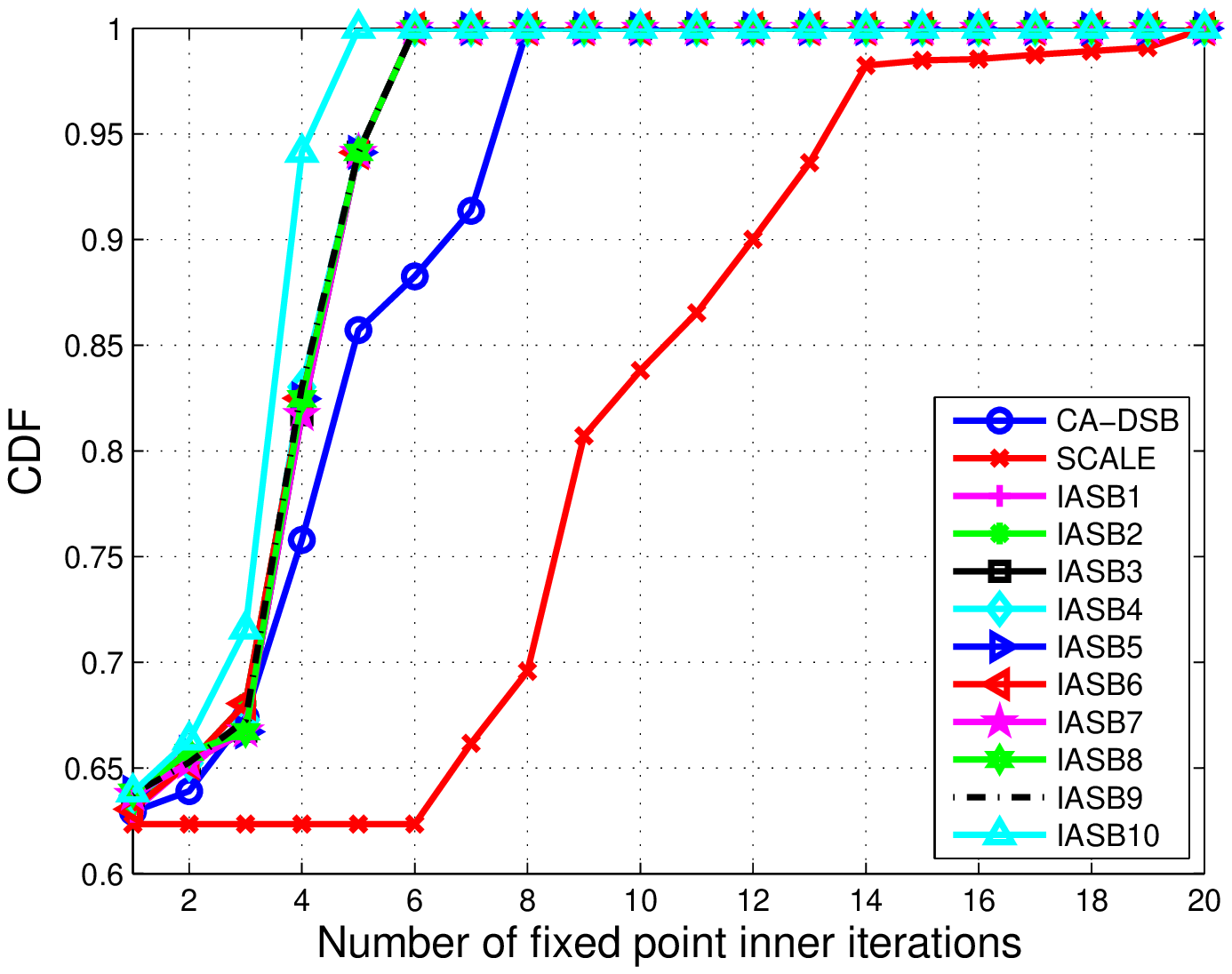}
\caption{Cumulative distribution function (CDF) of the average number of fixed point updates required to converge
(averaged over all scenarios of Table~\ref{tab:scenarios}).}
\label{fig:CDFiterations}
\end{figure}

The third column of Table~\ref{tab:performance} displays results
for which the univariate approximations are solved up to accuracy $0.1~$dBm using a fixed point update approach (instead of using the closed-form formula). 
It can be seen that even when using the fixed point update approach, as highlighted in Section~\ref{sec:iterativefixedpoint},
the newly proposed methods show a significant improvement in convergence speed.
All proposed methods perform better than existing methods, in particular IASB10.
The CDF of the average number of fixed point updates (averaged over all
DSL scenarios of Table~\ref{tab:scenarios}) is given in Figure~\ref{fig:CDFiterations}
to highlight the variance between the methods. The same observations can be made
as for the average number of approximations, except that the benefit of
IASB10 compared to the other methods is no longer so large.

Finally, we would like to highlight that the ISB method in \cite{ISB_Raphael}
is also an iterative per-user algorithm in which each subproblem is solved using an 
exhaustive search. The ISB algorithm however has two considerable disadvantages
compared to the proposed methods:
(i) solving a per-user subproblem with a simple exhaustive search requires much more computational effort, e.g., it takes several minutes of Matlab execution time to find the solution with ISB whereas only a few seconds are needed for the proposed methods. (ii) ISB considers a finite discrete granularity in transmit powers in contrast to the proposed methods, which negatively impacts the convergence speed. 

\subsection{Improved Solution Quality}\label{sec:quality}
In this section we focus on the improved solution quality, i.e., 
improved data rate performance, when using the proposed
iterative approximation methods with non-convex univariate approximations
such as IASB3.

First we will demonstrate this starting from scenario 5 in Table~\ref{tab:scenarios},
which is a 6-user US VDSL scenario, and focusing on one particular
(per-user per-tone) univariate approximation. In Figure~\ref{fig:approximations}
we plot the univariate approximations for user 5 on tone 600 for different
methods, where we consider the per-user weighted achievable bit loading
which corresponds to $-f_{600}^{\mathrm{app}}(s_{600}^5;\mathbf{\tilde{s}}_{600}^{-5},\mathbf{\tilde{s}}_{600},\theta)$. The approximation point is chosen at the spectral mask -30 dBm. 
It can be seen that the CA-DSB, SCALE, IASB1 and IASB2 approximations get 
stuck in a bad locally optimal solution at -30 dBm when solved using the 
closed-form approach, whereas the nonconvex approximations of IASB3, IASB4 
and IASB5 succeed in getting out of the bad locally optimal solution 
and result in the true global optimum at 0~dBm. In fact, IASB3 matches 
the original nonconvex objective. Note that we omit the plots of the 
IASB6, IASB7, IASB8, IASB9 and IASB10 approximations to avoid an overcrowded
figure. 

A second example focuses on scenario 8 of Table~\ref{tab:scenarios}, which
is an important near-far 2-user DS ADSL setting for which it is known that it
is difficult to find a good solution. In Table~\ref{tab:nonconvexcomparison}, we show
the final achievable data rate performance of the IASB1 and IASB3 methods in terms of weighted 
achievable data rate sum as well as individual achievable data rates. Note that we choose non-equal
weights for this scenario to prevent that the CO line of 5000m has a 
too small data rate performance, i.e., $w_1=1.2984$ and $w_2=0.1443$ which
are normalized with $\log(2)$.  
Although the weighted achievable data rate
sum is only slightly larger for IASB3, one can see that the individual achievable data
rate $R^2$ of the RT-user is $7\%$ better while the CO-user's achievable data rate $R^1$
is nearly unchanged. The reason for this improved solution 
quality is that the nonconvex approximations of IASB3 succeed in getting
out of bad locally optimal solutions, resulting in an overall better 
achievable data rate performance. In Figure~\ref{fig:transmitSpectra} the resulting
transmit spectra are shown for both the CO- and RT-users and for both methods IASB1
and IASB3. A clear difference can be seen in the tone range
110 to 123, where IASB3 succeeds in obtaining
better solutions with its nonconvex approximations, resulting 
in an overall improved solution quality.

\begin{table}
\caption{Comparison of solution quality for scenario 8 of Table~\ref{tab:scenarios}
with IASB1 (using convex approximations) and IASB3 (using nonconvex approximations).}
\label{tab:nonconvexcomparison}
\begin{center}
\begin{tabular}{ | c | c | c | }
  \hline
   Performance measure & IASB1  & IASB3 \\
   \hline
   $w_1 R^1 + w_2 R^2$ [Mbit/s] & $2.73~10^6$ & $2.78~10^6$ \\
   \hline
   $R^1$ [Mbit/s] & $1.60~10^6$ & $1.59~10^6$\\
   \hline
   $R^2$ [Mbit/s] & $\mathbf{4.55~10^6}$ & $\mathbf{4.87~10^6}$ \\
   \hline
\end{tabular}
\end{center}
\end{table}

We also want to highlight here that the improvement in solution quality obtained from the use of
nonconvex approximations is only observed when using the closed-form solution approach, and is
not obtained for the iterative fixed point update approach. For instance, for Figure~\ref{fig:approximations}, 
all methods would get stuck in -30 dBm when using the iterative fixed point
update approach. This is an important advantage supporting the use of the closed-form 
solution approach with the proposed methods. Note however that for CA-DSB, SCALE and IASB10 the
closed-form solution approach can not be used when considering scenarios with more than four users, as this requires computing the roots of some polynomials of degree larger than four, which is not always straightforward or even possible.
\begin{figure}[!t]
\centering
\includegraphics[width=0.65\columnwidth]{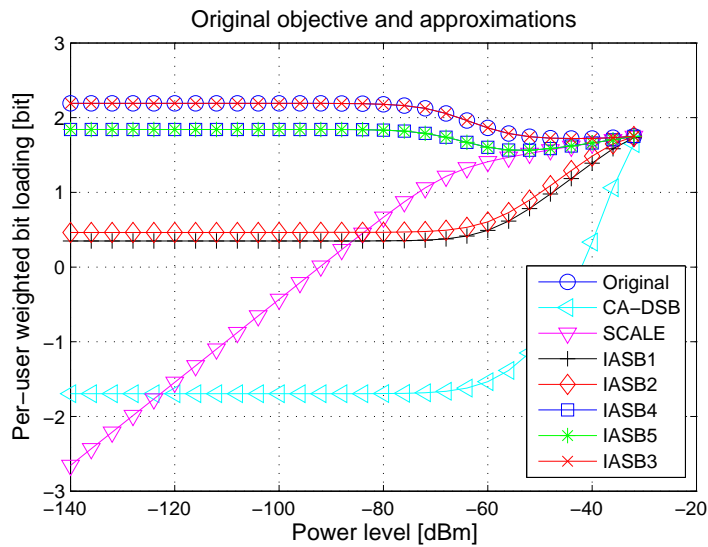}
\caption{Original nonconvex per-user objective and its approximations}
\label{fig:approximations}
\end{figure}

\begin{figure}[!t]
\centering
\includegraphics[width=0.75\columnwidth]{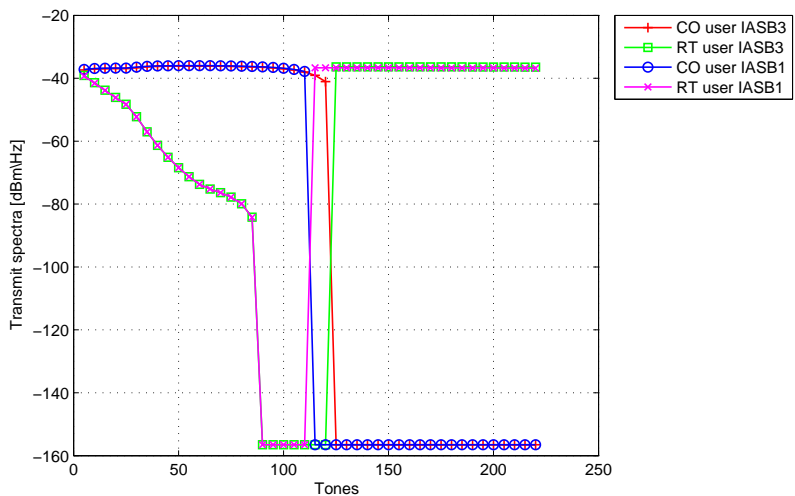}
\caption{Transmit spectra scenario 8 from Table~\ref{tab:scenarios} using IASB1 and IASB3.}
\label{fig:transmitSpectra}
\end{figure}

\subsection{Per-user Per-tone Selective Allocation of Approximations}\label{sec:optimalallocation}
As mentioned in Section~\ref{sec:allocationapproximations}, the proposed novel class of methods
allows to exploit an extra degree of freedom that consists in allocating
different approximations to different users and tones. For the per-user
per-tone problems that are easy to approximate and convex, choosing a low-complexity approximation such as IASB1 and IASB6 is sufficient. Then again, 
other per-user per-tone problems may have a very nonconvex behaviour, with
multiple local optima, for which it is more interesting to use nonconvex
approximations, at the cost of solving a cubic equation instead of a simple linear
equation.

Although it is not easy to determine which per-user per-tone problems
are 'easy', we can try to use some intuition. DSL scenarios
can consist of mixed settings with CO-users and RT-users. These mixed
settings are known to be scenarios with asymmetric crosstalk, i.e., the
RT-users cause a lot of crosstalk to CO-users whereas the CO-users cause
only minor crosstalk impact on the RT-users. From the univariate objective
point of view, this translates to having simple univariate convex objective
functions for the CO-users whereas having univariate non-convex objective
functions for the RT-users. The connection between a small crosstalk level
and convexity was also formally proven in \cite{Tsiaflakis2008a}. 
For the CO-users the usage of convex approximations
is good enough to converge to the univariate local optima, whereas for the
RT-users one can benefit from nonconvex approximations to obtain an improved
solution quality. We have performed simulations for the CO-RT scenario 8 of Table~\ref{tab:scenarios} where the IASB1 approximation is used for
the CO-line of 5000m and the IASB3 approximation is used for the RT-line of 
3000m. The resulting hybrid approach requires less computational
effort compared to the IASB3 method, as the CO-user only has to solve polynomials
of degree 1 instead of degree 3. The performance of the hybrid method is
the same as that of the IASB3 method, i.e., superior to that of those based on 
convex approximations. One can thus have the same improved 
performance of IASB3 with a significantly decreased computational cost.

Similarly one can also make the approximations tone-dependent, i.e., frequency selective. 
For instance, the crosstalk is larger at high frequencies and one can expect that non-convex
approximations may lead to better performance, in contrast to the low frequency
bands with small crosstalk for which low complexity convex approximations are 
good enough. An optimized allocation of approximation types over different tones
and users is an interesting topic for future research, but out of the 
scope of this paper.

\subsection{Choice of methods}
Although a thorough trade-off analysis (theoretically as well as empirically) 
is conducted for the different proposed methods, such as the relation 
graph of Figure~\ref{fig:relationgraph}, it is not easy to give concrete
recommendations on which method is the best. This is because this depends 
on the relative practical (hardware) cost of computing the solution of a
polynomial of degree 1, 2 and 3. Furthermore the best choice depends on
how much importance is given on preventing the method from getting stuck 
in bad locally optimal solutions.
If solution quality is extremely important one should go for one of the
nonconvex approximations, or a hybrid combination that takes the specific
scenario into account as highlighted in Section~\ref{sec:optimalallocation}.

\section{Conclusion}
Multi-user multi-carrier systems that follow an interference channel based
system model have become very important in practice. In such systems, 
the use of spectrum optimization can tackle the interference problem so 
as to spectacularly increase the achievable data rates. However this requires the development
of efficient optimization methods to solve the corresponding nonconvex
optimization problems. In this paper we demonstrate that existing spectrum optimization
methods can be significantly improved when taking their typical per-user implementation
explicitly into account at the early design stage. Applying this concept,
we derived a novel class of iterative approximation methods which focuses
on efficiently solvable univariate approximations, instead of the jointly convex approximations
of existing methods (CA-DSB and SCALE). For this novel class, a design
framework is proposed that can be used to construct improved approximations
that are much tighter than existing approximations and that can be solved 
in closed-form, in contrast to existing methods that require iterative
fixed point updates. As a result, methods belonging to this novel class both converge in fewer iterations
and require fewer computations per iteration. Using this design 
framework, we construct ten novel methods referred to as IASB1 up to IASB10,
and analyze and discuss their improved tightness and reduced computational cost. 
Methods IASB1, IASB2 and IASB6 involve convex univariate approximations, whereas
the other methods rely on nonconvex univariate approximations. 
Extensive simulations using a realistic multi-user DSL simulator demonstrate
a factor 2 to 4 improvement in convergence speed, while decreasing 
the computational cost per iteration. Furthermore the methods using
nonconvex approximations somehow allow to tackle nonconvexity of the optimization problem: since 
they allow escape from bad locally optimal solutions, they can improve solution 
quality compared to existing methods, which is demonstrated for IASB3, IASB4 and IASB5.
In particular, methods using multiple reference lines, namely IASB5, IASB8 as well
as IASB10, perform very well in terms of their trade-off between computational
cost, convergence speed and solution quality. Finally it is shown
that the proposed algorithm permits the selection of different approximations over
different users and tones, which allows for further performance improvements,
as demonstrated for a mixed DSL setting.

\bibliographystyle{IEEEtran}
\bibliography{bibliographyshort}

\begin{thebibliography}{10}
\providecommand{\url}[1]{#1}
\csname url@samestyle\endcsname
\providecommand{\newblock}{\relax}
\providecommand{\bibinfo}[2]{#2}
\providecommand{\BIBentrySTDinterwordspacing}{\spaceskip=0pt\relax}
\providecommand{\BIBentryALTinterwordstretchfactor}{4}
\providecommand{\BIBentryALTinterwordspacing}{\spaceskip=\fontdimen2\font plus
\BIBentryALTinterwordstretchfactor\fontdimen3\font minus
  \fontdimen4\font\relax}
\providecommand{\BIBforeignlanguage}[2]{{%
\expandafter\ifx\csname l@#1\endcsname\relax
\typeout{** WARNING: IEEEtran.bst: No hyphenation pattern has been}%
\typeout{** loaded for the language `#1'. Using the pattern for}%
\typeout{** the default language instead.}%
\else
\language=\csname l@#1\endcsname
\fi
#2}}
\providecommand{\BIBdecl}{\relax}
\BIBdecl

\bibitem{confTsiaflakis2012}
P.~Tsiaflakis and F.~Glineur, ``A novel class of iterative approximation
  methods for {DSL} spectrum optimization,'' in \emph{Proc. of the IEEE
  International Conference on Communications}, Ottawa, Canada, June 2012, pp.
  1--6.

\bibitem{Gesbert2010}
D.~Gesbert, S.~Hanly, H.~Huang, S.~S. Shitz, O.~Simeone, and W.~Yu,
  ``Multi-cell mimo cooperative networks: A new look at interference,''
  \emph{IEEE Journal on Selected Areas in Communications}, vol.~28, no.~9, pp.
  1380--1408, Dec. 2010.

\bibitem{Cendrillon2006}
R.~Cendrillon, W.~Yu, M.~Moonen, J.~Verlinden, and T.~Bostoen, ``Optimal
  multiuser spectrum balancing for digital subscriber lines,'' \emph{IEEE
  Transactions on Communications}, vol.~54, no.~5, pp. 922--933, May 2006.

\bibitem{energyDSLnordstrom}
M.~Wolkerstorfer, D.~Statovci, and T.~Nordström, ``Dynamic spectrum management
  for energy-efficient transmission in {DSL},'' in \emph{Proc. of the Eleventh
  IEEE International Conference on Communications Systems, China}, Nov. 2008.

\bibitem{Monteiro2009}
M.~Monteiro, N.~Lindqvist, and A.~Klautau, ``Spectrum balancing algorithms for
  power minimization in {DSL} networks,'' in \emph{Proc. of the IEEE
  International Conference on Communications}, Dresden, Germany, Jun. 2009, pp.
  1--5.

\bibitem{Tsiaflakis2011}
P.~Tsiaflakis, Y.~Yi, M.~Chiang, and M.~Moonen, ``Fair greening of broadband
  access: spectrum management for energy-efficient {DSL} networks,''
  \emph{EURASIP Journal on Wireless Communications and Networking}, vol.
  2011:140, 2011.

\bibitem{Wurtz2012}
D.~W\"urtz, A.~Klein, and M.~Kuipers, ``Distributed margin optimization using
  spectrum balancing in multi-user {DSL} systems,'' in \emph{20th European
  Signal Processing Conference}, Bucharest, Romania, Aug. 2012, pp. 1--5.

\bibitem{Monteiro2010}
M.~Monteiro, A.~Gomes, N.~Lindqvist, B.~Dortschy, and A.~Klautau, ``An
  algorithm for improved stability of {DSL} networks using spectrum
  balancing,'' in \emph{Proc. of the IEEE Global Telecommunications
  Conference}, Miami, Florida, USA, Dec. 2010, pp. 1--6.

\bibitem{Han1981}
T.~Han and K.~Kobayashi, ``A new achievable rate region for the interference
  channel,'' \emph{IEEE Transactions on Information Theory}, vol.~27, no.~1,
  pp. 49--60, Jan. 1981.

\bibitem{Sato1981}
H.~Sato, ``The capacity of the gaussian interference channel under strong
  interference,'' \emph{IEEE Transactions on Information Theory}, vol.~27,
  no.~6, pp. 786--788, 1981.

\bibitem{Sung2008}
C.~Sung, K.~Lui, K.~Shum, and H.~So, ``Sum capacity of one-sided parallel
  gaussian interference channels,'' \emph{IEEE Transactions on Information
  Theory}, vol.~54, no.~1, pp. 468--472, Jan. 2008.

\bibitem{Liu2008}
N.~Liu and S.~Ulukus, ``The capacity region of a class of discrete degraded
  interference channels,'' \emph{IEEE Transactions on Information Theory},
  vol.~54, no.~9, pp. 4372--4378, Sep. 2008.

\bibitem{Chong2008}
H.-F. Chong, M.~Motani, H.~K. Garg, and H.~E. Gamal, ``On the han-kobayashi
  region for the interference channel,'' \emph{IEEE Transactions on Information
  Theory}, vol.~54, no.~7, pp. 3188--3195, Jul. 2008.

\bibitem{Kramer2004}
G.~Kramer, ``Outer bounds on the capacity of gaussian interference channels,''
  \emph{IEEE Transactions on Information Theory}, vol.~50, no.~3, pp. 581--586,
  Mar. 2004.

\bibitem{Etkin2008}
R.~Etkin, D.~Tse, and H.~Wang, ``Gaussian interference channel capacity to
  within one bit,'' \emph{IEEE Transactions on Information Theory}, vol.~54,
  no.~12, pp. 5534--5562, Dec. 2008.

\bibitem{Shang2009}
X.~Shang, G.~Kramer, and B.~Chen, ``A new outer bound and the
  noisy-interference sum-rate capacity for gaussian interference channels,''
  \emph{IEEE Transactions on Information Theory}, vol.~55, no.~2, pp. 689--699,
  Feb. 2009.

\bibitem{Chiang2007}
M.~Chiang, C.~Tan, D.~Palomar, D.~O'Neill, and D.~Julian, ``Power control by
  geometric programming,'' \emph{IEEE Transactions on Wireless Communications},
  vol.~6, no.~7, pp. 2640--2651, Jul. 207.

\bibitem{Chiang2008}
M.~Chiang, P.~Hande, T.~Lan, and C.~Tan, ``Power control in wireless cellular
  networks,'' \emph{Foundations and Trends in Networking}, vol.~2, no.~4, pp.
  381--533, 2008.

\bibitem{Haykin2005}
S.~Haykin, ``Cognitive radio: brain-empowered wireless communications,''
  \emph{IEEE Journal on Selected Areas in Communications}, vol.~23, no.~2, pp.
  201--220, 2005.

\bibitem{powerFemtocell}
V.~Chandrasekhar, J.~G. Andrews, T.~Muharemovic, Z.~Shen, and A.~Gatherer,
  ``Power control in two-tier femtocell networks,'' \emph{IEEE Transactions on
  Wireless Communications}, vol.~8, no.~8, pp. 4316--4328, Aug 2009.

\bibitem{Femtocell2012}
G.~Aristomenopoulos, T.~Kastrinogiannis, S.~Lamprinakou, and S.~Papavassiliou,
  ``Optimal power control and coverage management in two-tier femtocell
  networks,'' \emph{EURASIP Journal on Wireless Communications and Networking},
  vol. 2012:329, 2012.

\bibitem{CheeWeiTan2011}
C.~Tan, S.~Friedl, and S.~H. Low, ``Spectrum management in multiuser cognitive
  wireless networks: Optimality and algorithm,'' \emph{IEEE Journal on Selected
  Areas in Communications}, vol.~29, no.~2, pp. 421--430, Feb. 2011.

\bibitem{Weeraddana2011}
P.~Weeraddana, M.~Codreanu, M.~Latva-aho, and A.~Ephremides, ``Weighted
  sum-rate maximization for a set of interfering links via branch and bound,''
  \emph{IEEE Transactions on Signal Processing}, vol.~59, no.~8, pp.
  3977--3996, Aug. 2011.

\bibitem{Jing2012}
Z.~Jing, B.~Bai, and X.~Ma, ``Price-based interference avoidance game in the
  gaussian interference channel,'' \emph{Science China Information Sciences},
  vol.~55, no.~2, pp. 301--311, 2012.

\bibitem{Zhao2009}
Y.~Zhao and G.~J. Pottie, ``Optimal spectrum management in multiuser
  interference channels,'' in \emph{Proc. of the IEEE International Symposium
  on Information Theory}, Seoul, Korea, Jun. 2009, pp. 2266--2270.

\bibitem{Ren2010}
S.~Ren and M.~van~der Schaar, ``Distributed power allocation in multi-user
  multi-channel cellular relay networks,'' \emph{IEEE Transactions on Wireless
  Communications}, vol.~9, no.~6, pp. 1952--1964, Jun. 2010.

\bibitem{Son2011}
K.~Son, S.~Lee, Y.~Yi, and S.~Chong, ``{REFIM}: A practical interference
  management in heterogeneous wireless access networks,'' \emph{IEEE Journal on
  Selected Areas in Communications}, vol.~29, no.~6, pp. 1260--1272, Jun. 2011.

\bibitem{Weeraddana2011b}
P.~Weeraddana, M.~Codreanu, M.~Latva-aho, and A.~Ephremides, ``Resource
  allocation for cross-layer utility maximization in wireless networks,''
  \emph{IEEE Transactions on Vehicular Technology}, vol.~60, no.~6, pp.
  2790--2809, Jul. 2011.

\bibitem{Tsiaflakis2008}
P.~Tsiaflakis, Y.~Yi, M.~Chiang, and M.~Moonen, ``Throughput and delay of {DSL}
  dynamic spectrum management with dynamic arrivals,'' in \emph{Proc. of the
  IEEE Global Telecommunications Conference}, Nov. 2008, pp. 1--5.

\bibitem{Tsiaflakis2012}
------, ``Throughput and delay performance of {DSL} broadband access with
  cross-layer dynamic spectrum management,'' \emph{IEEE Transactions on
  Communications}, vol.~60, no.~9, pp. 2700--2711, Sep. 2012.

\bibitem{ASB}
R.~Cendrillon, J.~Huang, M.~Chiang, and M.~Moonen, ``Autonomous spectrum
  balancing for digital subscriber lines,'' \emph{IEEE Transactions on Signal
  Processing}, vol.~55, no.~8, pp. 4241--4257, Aug. 2007.

\bibitem{dsb}
P.~Tsiaflakis, M.~Diehl, and M.~Moonen, ``Distributed spectrum management
  algorithms for multiuser {DSL} networks,'' \emph{IEEE Transactions on Signal
  Processing}, vol.~56, no.~10, pp. 4825--4843, Oct. 2008.

\bibitem{Leung10}
C.~Leung, S.~Huberman, and T.~Le-Ngoc, ``Autonomous spectrum balancing using
  multiple reference lines for digital subscriber lines,'' in \emph{Proc. of
  IEEE Global Telecommunications Conference}, Houston, TX, December 2010.

\bibitem{MIW}
W.~Yu, ``Multiuser water-filling in the presence of crosstalk,'' in
  \emph{Information Theory and Applications (ITA)}, San Diego, USA, Feb. 2007.

\bibitem{scalejournal}
J.~Papandriopoulos and J.~S. Evans, ``{SCALE}: A low-complexity distributed
  protocol for spectrum balancing in multiuser {DSL} networks,'' \emph{IEEE
  Transactions on Information Theory}, vol.~55, no.~8, pp. 3711--3724, Aug.
  2009.

\bibitem{Moraes2010}
R.~B. Moraes and J.~R. i.~R. B.~Dortschy, A.~Klautau, ``Semiblind spectrum
  balancing for {DSL},'' \emph{IEEE Transactions on Signal Processing},
  vol.~58, no.~7, pp. 3717--3727, Jul. 2010.

\bibitem{Wang2012}
T.~Wang and L.~Vandendorpe, ``Successive convex approximation based methods for
  dynamic spectrum management,'' in \emph{Proc. of the IEEE International
  Conference on Communications}, Ottawa, Canada, Jun. 2012.

\bibitem{PBnB}
Y.~Xu, T.~Le-Ngoc, and S.~Panigrahi, ``Global concave minimization for optimal
  spectrum balancing in multi-user {DSL} networks,'' \emph{IEEE Trans. on
  Signal Processing}, vol.~56, no.~7, pp. pp. 2875--2885, Jul. 2008.

\bibitem{tsiaflakis_bbosb}
P.~Tsiaflakis, J.~Vangorp, M.~Moonen, and J.~Verlinden, ``A low complexity
  optimal spectrum balancing algorithm for digital subscriber lines,''
  \emph{Signal Processing}, vol.~87, no.~7, pp. 1735--1753, Jul. 2007.

\bibitem{ISB_Raphael}
R.~Cendrillon and M.~Moonen, ``Iterative spectrum balancing for digital
  subscriber lines,'' in \emph{Proc. of the IEEE International Conference on
  Communications}, vol.~3, no.~3, May 2005, pp. 1937--1941.

\bibitem{dual_journal}
W.~Yu and R.~Lui, ``Dual methods for nonconvex spectrum optimization of
  multicarrier systems,'' \emph{IEEE Transactions on Communications}, vol.~54,
  no.~7, Jul. 2006.

\bibitem{Wolkerstorfer2012}
M.~Wolkerstorfer, J.~Jalden, and T.~Nordström, ``Low-complexity optimal
  discrete-rate spectrum balancing in digital subscriber lines,'' \emph{Signal
  Processing}, pp. 1--12, Jun. 2012.

\bibitem{Luo2009}
Z.~Q. Luo and S.~Zhang, ``Duality gap estimation and polynomial time
  approximation for optimal spectrum management,'' \emph{IEEE Transactions on
  Signal Processing}, vol.~57, no.~7, pp. 2675--2689, Jul. 2009.

\bibitem{Tsiaflakis2007}
P.~Tsiaflakis, J.~Vangorp, M.~Moonen, and J.~Verlinden, ``Convex relaxation
  based low-complexity optimal spectrum balancing for multi-user {DSL},'' in
  \emph{Proceedings of IEEE International Conference on Acoustics, Speech and
  Signal Processing}, vol.~3, Apr. 2007, pp. III--349 -- III--352.

\bibitem{Huberman09}
S.~Huberman, C.~Leung, and T.~Le-Ngoc, ``Dynamic spectrum management (dsm)
  algorithms for multi-user x{DSL},'' \emph{IEEE Communications Surveys and
  Tutorials}, vol.~14, no.~1, First quarter 2012.

\bibitem{Kerpez93}
K.~Kerpez, ``Near-end crosstalk is almost gaussian,'' \emph{IEEE Transactions
  on Communications}, vol.~41, no.~5, pp. 670--672, May 1993.

\bibitem{UnderstandingDSL}
T.~Starr, J.~M. Cioffi, and P.~J. Silverman, \emph{Understanding digital
  subscriber lines}.\hskip 1em plus 0.5em minus 0.4em\relax Prentice Hall,
  1999.

\bibitem{Marks78}
B.~Marks and G.~Wright, ``A general inner approximation algorithm for nonconvex
  mathematical programs,'' \emph{Operations Research}, vol.~26, no.~4, pp.
  681--683, 1978.

\bibitem{DSMluo}
Z.~Q. Luo and S.~Zhang, ``Dynamic spectrum management: Complexity and
  duality,'' \emph{IEEE Journal of Selected Topics in Signal Processing},
  vol.~2, no.~1, pp. 57--73, Feb. 2008.

\bibitem{dsm}
``Spectrum management for loop transmission systems,'' ANSI, Tech. Rep. T1.417,
  Issue 2, 2003.

\bibitem{ETSI:ts101}
``{Transmission and Multiplexing (TM); Access transmission systems on metallic
  access cables; Very high speed Digital Subscriber Line (VDSL); Functional
  Requirements},'' ETSI, Tech. Rep., 2003.

\bibitem{Tsiaflakis2008a}
P.~Tsiaflakis, C.~W. Tan, Y.~Yi, M.~Chiang, and M.~Moonen, ``Optimality
  certificate of dynamic spectrum management in multi-carrier interference
  channels,'' in \emph{Proc. of the IEEE International Symposium on Information
  Theory}, Jul. 2008, pp. 1298--1302.

\end{thebibliography}

\end{document}